\newtheorem{lemma}{Lemma}
\newcommand{\dft}[1]{\mathfrak{F}({#1})}
\newcommand{\idft}[1]{\mathfrak{F}^{-1}({#1})}
\def \S{{ \mathbb{S} }} 
\def \X{{ \tilde{X} }}
\def \Y{{ Y  }}
\def \P{{ \mathcal{P} }}
\def \Q{{ \mathcal{Q} }}
\def \T{{ \mathcal{T} }}
\def \Z{{ \mathcal{Z} }}
\def \F{{ \mathcal{F} }}
\def \U{{ \mathcal{U} }}
\def \R{{ \mathbb{R} }}
\newcolumntype{P}[1]{>{\centering\arraybackslash}p{#1}}
\def \l{{ l }}
\def \g{{ \tilde{g} }}
\def \G{{ \tilde{G} }}
\def \u{{ u }}
\newtheorem{assumption}{Assumption}
\newtheorem{theorem}{Theorem}[section]
\newcommand{\revv}[1]{{\color{black}#1}}
\newcommand{\edit}[1]{{\color{black}#1}}
\renewcommand{\ul}[1]{\unskip}
\renewcommand{\st}[1]{\unskip}
\title{Signal Reconstruction from Quantized Noisy Samples of \ul{Phase of} the
Discrete Fourier Transform}
\author{\IEEEauthorblockN{Mohak Goyal\thanks{Most of this work was done when Mohak Goyal was with the Department of Electrical Engineering at IIT Bombay. He is now with the Department of Management Science \& Engineering at Stanford University.} and Animesh Kumar}

\IEEEauthorblockA{Department of Electrical Engineering, Indian Institute of
Technology Bombay, India\\
goyalmohak2@gmail.com, animekum@outlook.com }}
\begin{document}

\maketitle

\begin{abstract}
In this paper, we present two variations of an algorithm for signal
reconstruction from one-bit or two-bit noisy observations of the \ul{phase of
the} discrete Fourier transform (DFT). The one-bit observations of the DFT
\ul{phase} correspond to the sign of its real part, whereas, the two-bit
observations of the DFT \ul{phase} correspond to the signs of both the real and
imaginary parts of the DFT. We focus on images for analysis and simulations,
thus using the sign of the 2D-DFT. This choice of the class of signals is
inspired by previous works on this problem. For our algorithm, we show that the
expected mean squared error (MSE) in signal reconstruction is asymptotically
proportional to the inverse of the sampling rate. The samples are affected by
additive zero-mean noise of known distribution. We solve this signal estimation
problem by designing an algorithm that uses contraction mapping, based on the
Banach fixed point theorem. Numerical tests with four benchmark images are
provided to show the effectiveness of our algorithm. Various metrics for image
reconstruction quality assessment such as PSNR, SSIM, ESSIM, and MS-SSIM are
employed. On all four benchmark images, our algorithm outperforms the
state-of-the-art in all of these metrics by a significant margin.
\end{abstract}
\begin{IEEEkeywords}
Denoising, single-bit sampling, contraction mapping, companding, quantization,
image processing. 
\end{IEEEkeywords}

\section{Introduction}
\label{sec:intro}
Signal reconstruction from partial information of its Fourier transform (FT) has
been of interest since a long time, both for its practical applications and for
its fundamental understanding~\cite{oppenheim1983signal}. Two well-known types
of problems in this class are the phase retrieval and magnitude retrieval
problems~\cite{oppenheim1983signal,fienup2013phase}. The phase retrieval problem
is the reconstruction or estimation of a signal from its FT magnitude
information. It has applications in areas such as electron
microscopy~\cite{saxton2013computer} and X-ray
crystallography~\cite{ramachandran1970fourier}. Recent developments in phase
retrieval include its formulation as a semi-definite program with robustness
guarantees by Cand\`es, Strohmer, and Voroninski~\cite{candes2013phaselift}.
Huang, Eldar, and Sidiropoulos~\cite{huang2016phase} gave a polynomial time
algorithm with uniqueness and optimality guarantees for the phase retrieval
problem. Phase retrieval from one-bit measurements of the magnitude has been
studied in~\cite{kishore2020wirtinger} and~\cite{mukherjee2018phase}, among
others.

The magnitude retrieval problem is the reconstruction of a signal from partial
information of its FT phase. Its applications are in situations where the signal
is distorted by a zero-phase blurring or point-spread function. In such cases
the magnitude information is lost but the phase is retained. Li and Kurkjian
~\cite{li1983arrival} showed that magnitude retrieval can be used to solve some
problems in arrival-time estimation. An interesting version of the magnitude
retrieval problem is the case where only coarsely quantized readings of the FT
phase of the signal are available. The signal reconstruction problem with
one-bit readings of FT phase has also been studied. Curtis, Oppenheim, and Lim
showed that most two-dimensional signals can be reconstructed to within a scale
factor from only one bit of 2D-DFT phase~\cite{curtis1985signal}. They presented
an iterative algorithm which consists of projection onto the support region in
the spatial domain and enforcement of phase information in the frequency
domain. Tang, Yuan, and Wang presented an improvement to this algorithm with a
specified histogram constraint~\cite{tang1990image}. 

Lyuboshenko and Akhmetshin studied signal reconstruction with noisy FT
phase~\cite{lyuboshenko1999stable}. They proposed global and local
regularization based reconstruction algorithms. Working on signal reconstruction
with noisy FT phase, Thomas and Hayes proposed algorithms which incorporate side
information such as a bound on the noise and the 2D-DFT
magnitudes~\cite{thomas1984procedures}. However, these works considered
full-precision measurements only. Unlike these previous works, in this paper we
consider the \ul{magnitude retrieval} \revv{signal reconstruction} problem with
two deficiencies in the available information of the FT \ul{phase}: (1) the
recordings are corrupted with additive zero-mean noise, and (2) the recordings
are quantized with precision of only one or two bits \ul{per sample of the FT
phase}.

Signal reconstruction from coarsely quantized samples is well known in classical
signal processing~\cite{gray1987oversampled, masry, thong2002nonlinear,
daubechies2003approximating, kumar2013estimation,cvetkovic2000single} and is
particularly appealing in hardware implementations. The quantizer to one-bit is
a comparator to zero and is quite fast, thus enabling high sampling rates.
Dealing with one-bit recordings in the time (or spatial) domain, Kumar and
Prabhakaran obtained a mean squared error of $O(1/K)$ for (classically)
bandlimited signals with $K$ being the oversampling factor with respect to the
Nyquist rate~\cite{kumar2013estimation}. Cvetkovi\`c, Daubechies, and Logan
considered the case of irregular sampling and used a deterministic dither,
obtaining a $O(1/K)$ pointwise error~\cite{cvetkovic2000single}. Khobahi et. al.
have employed deep neural networks for one-bit signal
recovery~\cite{khobahi2019deep}. The application of one-bit samples for channel
estimation has been explored in~\cite{bender2020spectral}
and~\cite{shao2019channel}, among others. Another application of one-bit
quantization has been shown in graph signal processing for bandlimited graph
signals by Goyal and Kumar~\cite{goyal2018estimation}.

Compressed sensing with one-bit samples was introduced by Boufounos and
Baraniuk~\cite{boufounos20081}. They gave a convex relaxation of the problem,
employing a one-sided quadratic penalty. Zymnis, Boyd, and Cand\`es gave two
algorithms for compressed sensing with one-bit samples, based on $l_1$
regularised least squares and $l_1$ regularised maximum
likelihood~\cite{zymnis2010compressed}. Xu and Jacques proposed an algorithm
that utilizes a random dither~\cite{xu2018quantized} for one-bit compressed
sensing. Jacques et. al., gave the binary iterative hard thresholding (BIHT)
algorithm which is also robust to noise~\cite{jacques2013robust}. \revv{For
noiseless signals, Friedlander et al.  proved that a variant of BIHT achieves
the optimal $O(1/K)$ error decay rate with high
probability~\cite{friedlander2020nbiht}.} We use the BIHT algorithm as a
baseline while evaluating the performance of our algorithm. 

\revv{Boufounos introduced angle-preserving quantized phase
embeddings~\cite{boufounos2013angle}. They consider a real valued signal, for
which phase measurements were obtained through a complex linear transform. They
showed that these embeddings generalize the binary epsilon stable embeddings in
the same sense that the phase of complex numbers generalizes the sign of real
numbers. Boufounos introduced complex compressive sensing where measurements of
a sparse signal are obtained via a complex, fat sensing
matrix~\cite{boufounos2013sparse}. They proved that with complex Gaussian
random sensing matrices, one can estimate the direction of such a signal from
the phase of the compressive measurements. Jacques and Feuillen extended this
idea to any signals belonging to a symmetric, low-complexity conic set of
reduced dimensionality, including the set of sparse signals or the set of
low-rank matrices~\cite{jacques2020importance}. These works consider only
noiseless signals.}

In this work, we consider the \ul{magnitude retrieval} problem of signal
estimation from noisy quantized readings of the DFT \ul{phase}. With regard to
quantization, we restrict our attention to one-bit or two-bit precision only.
The one-bit observations of the DFT correspond to the sign of its real part,
whereas, the two-bit observations of the DFT correspond to the signs of both the
real and imaginary parts of the DFT. Similar to Kumar and Prabhakaran
in~\cite{kumar2013estimation}, we provide a Banach contraction mapping based
algorithm for estimation and denoising. Unlike their work, the algorithm we
present can also be used to estimate non-bandlimited signals. \ul{In the case of
one-bit quantization of the DFT, we have the information corresponding to the
sign of the real part of the DFT. In this case, we need sufficient zero-padding
of the signal in the time domain (or spatial domain for images) to ensure that
the real part of the DFT (or 2D-DFT) phase contains complete information of the
signal, as discussed in}

\revv{Our algorithm achieves the optimal $O(1/K)$ error decay rate. This rate
has also been observed by other signal reconstruction approaches, for example in
linear regression~\cite{seber2012linear}, with quantized
samples~\cite{goyal1998quantized, cvetkovic2000single, kumar2013estimation}, and
in compressed sensing \cite{ai2014one, plan2013one}. While there exists
literature on compressed sensing with one-bit samples, (see
e.g.,~\cite{boufounos20081, xu2018quantized, zymnis2010compressed,
jacques2013robust, friedlander2020nbiht, ai2014one, plan2013one}) there has not
been much work on denoising of `lowpass' signals from noise affected and
single-bit quantized samples. Thus, there is significant difference in the
signal model (lowpass versus sparse), presence or absence of noise, and
guarantees (analytical results versus recovery algorithms with simulations)
between this work and the compressed sensing literature.  We sample only the
sign of the real and imaginary parts of the noisy DFT, and use the distribution
of the noise to perform signal reconstruction with $O(1/K)$ error decay rate.} 

There are two variations of the algorithm we present. Algorithms~\ref{algo:1}
and~\ref{algo:2} solve the signal reconstruction problem with one-bit and
two-bit recordings respectively. \ul{For the case of two-bit quantization, we
obtain the sign of both the real and imaginary parts of the DFT.} We use
Banach's contraction mapping theorem to prove that our algorithm converges to a
unique point. We also provide a proof of the decay of the expected mean squared
error (MSE) for our algorithm with increase in sampling rate. It is inversely
proportional to the sampling rate. In this paper we consider grayscale images
\ul{only}. We demonstrate the results on the IEEE logo image, the Lena image,
the cameraman image, and the peppers image. We also show comparisons with two
current \ul{state-of-the-art} algorithms.

\textbf{Paper outline:} The problem setup is described in
Section~\ref{sec:model}. The algorithms are given in Section~\ref{sec:algo}. In
Section~\ref{sec:proofs}, bounds on the estimation error are given with proofs.
Simulations and their results are explained in Section~\ref{sec:sim}.
Conclusions are drawn in Section~\ref{sec:conclusion}.

\textbf{Notation:} We use standard notation: the set of real numbers is denoted
by $\R$, the complex numbers by $\mathbb{C}$, the 2D-discrete Fourier transform
(2D-DFT) of the real image $g[n_1,n_2]$ is denoted by $\g[k_1,k_2]$.  The
indicator function, denoted by $\mathbbm{1}(x > 0)$, takes value $1$ if $x > 0$
and $0$ otherwise.  The Frobenius norm and the \ul{infinity} \revv{max} norm of
matrix $\mathcal{M}$ are given by $\|\mathcal{M}\|_F$ and
$\|\mathcal{M}\|_{\max}$ respectively. The transpose of matrix $\mathcal{M}$ is
denoted by $\mathcal{M}^\intercal$.  The complex conjugate of $c$ is denoted by
$c^*$.  The vectorization operation on a matrix corresponds to concatenating its
columns in order. The vectorized form of matrix $\mathcal{M} \in
\mathbb{C}^{N\times N}$ is denoted by the vector $\mathcal{M}^v \in
\mathbb{C}^{N^2}$.  The 2D-DFT operator is denoted \revv{by
$\mathfrak{F}(\cdot)$} such that $\g[k_1,k_2] = \mathfrak{F}(g[n_1,n_2])$ and
the inverse 2D-DFT (2D-IDFT) operator is denoted \revv{by
$\mathfrak{F}^{-1}(\cdot)$} such that $g[n_1,n_2] =
\mathfrak{F}^{-1}(\g[k_1,k_2])$. 
\section{Signal and sampling model}
\label{sec:model}

We consider grayscale images. \revv{As is natural for images, we consider that
the signal value at any pixel is in the interval $[0, 255]$}. The sampling is
done in the frequency domain. Let the image $g[n_1,n_2]$ be of $M \times M$
dimension. The 2D-DFT is computed such that there are $N\times N$ samples in the
2D-DFT of $g[n_1,n_2]$, i.e., in $\g[k_1,k_2]$. We ensure that $N>2M$ such that
the real part of 2D-DFT is sufficient to reconstruct the
image~\cite{so2018reconstruction}. 
\begin{figure*}[h]
\begin{center}
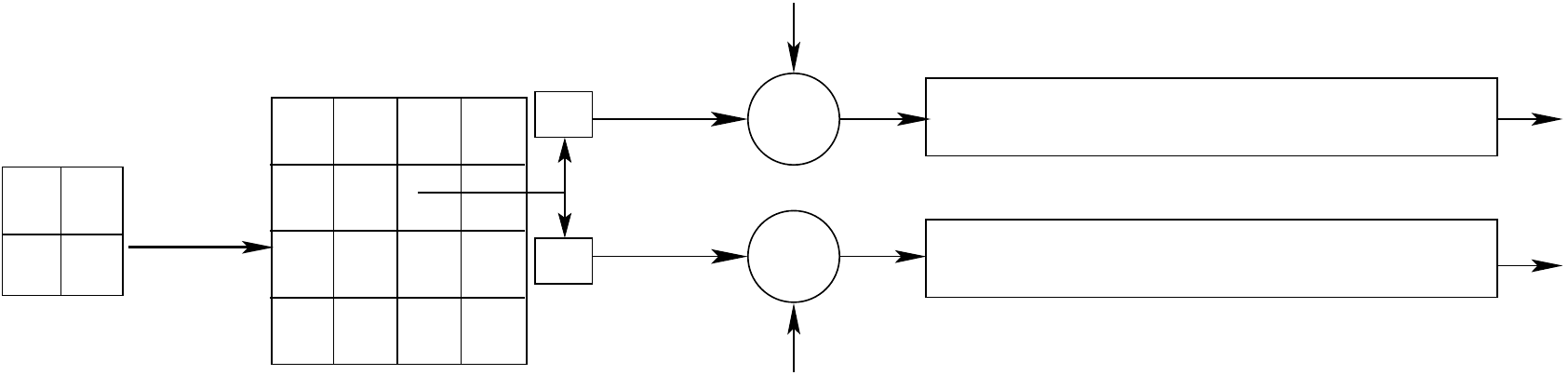
\caption{The sampling scheme for our reconstruction problem. The 2D-DFT of the
image has additive zero-mean noise, $W_R[k_1,k_2]$ and $W_I[k_1,k_2]$, in its
real and imaginary parts respectively. AWGN dither $d_R[k_1,k_2]$ and
$d_I[k_1,k_2]$ are also added. One bit each of the real and imaginary parts of
the noisy 2D-DFT are recorded via a comparator with $0$.}
\label{fig:sampling}
\end{center}
\end{figure*}
Denote the noise affecting the real and imaginary parts by $W_R[k_1,k_2]$ and
$W_I[k_1,k_2]$ respectively. For the noise model, we \revv{have the following
assumption:
\begin{assumption} \label{assumption:noise}
The real and imaginary parts of the 2D-DFT are affected by independent additive
zero-mean noise of a known distribution, which is symmetric w.r.t. $0$. 
%
%
\end{assumption}}

Additional additive white Gaussian noise (AWGN) of variance $\sigma_d^2$ is
added to dither the signal \revv{if the noise distribution doesn't have mass
over the entire region withing the bounds of the signal. This is required since
the proposed algorithm depends on the presence of noise of known distribution to
be able to recover the signal. This requirement is satisfied by adding the
dither noise. The MLE based algorithm of Bahmani, Boufounos, and Raj would also
require noise to be present, they handle it by heuristically modifying the cost
function when there is no noise~\cite{bahmani2013robust}.}

 Dither $d_R[k_1,k_2]$ and $d_I[k_1,k_2]$ are added to the real and imaginary
parts of the 2D-DFT respectively. The use of dithering with quantized signals is
well known \cite{wagdy1989effect}. In our model, dither is added to ensure that
there is sufficient variance in the samples, and the magnitude information is
captured in it. The use of dither is further discussed in
Section~\ref{subsec:dither}. In our sampling model, as shown in
Fig.~\ref{fig:sampling}, one-bit samples $\X_R[k_1,k_2]$ and $\X_I[k_1,k_2]$, of
the real part, $\tilde{g_R}[k_1,k_2]$, and imaginary part,
$\tilde{g_I}[k_1,k_2]$, respectively of the 2D-DFT, $\tilde{g}[k_1,k_2]$, are
recorded as follows: 
\begin{align}
 &\X_R[k_1,k_2] = \nonumber \\
 &\mathlarger{\mathbbm{1}}\left( \tilde{g_R}[k_1,k_2] + W_R[k_1,k_2] +
d_R[k_1,k_2] >0  \right ) - \frac{1}{2}  \label{eq:samp1}, \\
&\X_I[k_1,k_2] = \nonumber \\
&\mathlarger{\mathbbm{1}}\left(  \tilde{g_I}[k_1,k_2] + W_I[k_1,k_2] +
d_I[k_1,k_2] >0  \right ) - \frac{1}{2}.  \label{eq:samp2}
\end{align}
Due to conjugate symmetry of the 2D-DFT of real signals, we need to sample only
half of the entries of $\X_R[k_1,k_2]$ and $\X_I[k_1,k_2]$. It must noted that
Algorithm~\ref{algo:2} uses both $\X_R[k_1,k_2]$ and $\X_I[k_1,k_2]$, i.e., two
bits of the noisy \ul{phase of the}  2D-DFT. Whereas, Algorithm~\ref{algo:1}
uses only $\X_R[k_1,k_2]$ i.e., only one bit of the noisy \ul{phase of the}
2D-DFT of the image.

Recall that the 2D-DFT of $g \in \mathbb{C}^{N \times N}$ is given by:
\begin{align}
\tilde{g}[k,l] = \sum_{n_1=0}^{N-1} \sum_{n_2=0}^{N-1} g[n_1,n_2]
\exp\Big(-j2\pi\Big( \frac{kn_1}{N}+\frac{ln_2}{N}  \Big)\Big). \label{eq:2d-dft}
\end{align}
Whereas, the 2D-IDFT of $\tilde{g} \in \mathbb{C}^{N \times N}$ is given by:
\begin{align}
g[k,l] = \frac{1}{N^2} \sum_{n_1=0}^{N-1} \sum_{n_2=0}^{N-1} \tilde{g}[n_1,n_2] \exp\Big(j2\pi\Big( \frac{kn_1}{N}+\frac{ln_2}{N}  \Big)\Big). \label{eq:2d-idft}
\end{align}
In the following section we give the two variations of the proposed algorithm,
i.e., Algorithms~\ref{algo:1} and~\ref{algo:2}.

\section{Proposed Signal Reconstruction Algorithm}
\label{sec:algo}

In Subsection~\ref{subsec:algo1} we discuss Algorithm~\ref{algo:1}, which uses
one-bit recordings of the 2D-DFT \ul{phase}. 
\subsection{Using one-bit recordings of the noisy 2D-DFT phase}
\label{subsec:algo1}
Recall that $\X_R[k_1,k_2]$ is the real part of $\X[k_1,k_2]$ and is recorded
according to sampling model~\eqref{eq:samp1}. \edit{Define the following subset of the set of real numbers, $\S:= [-255N^2, 255N^2]$.} \edit{Let operator $\P:\mathbb{S}^{N\times N} \to  \mathbb{S}^{N\times N}$ be defined as:
\begin{align}
\P(\g) & :=  \dft{\textsc{Clip}(\textsc{Proj}(\idft{\g}))}. \label{eq:P}
\end{align}
See that operator $\P$ performs a series of four operations on the frequency domain input. First it computes a 2D-DFT to convert the argument to the spatial domain. Then it sets the pixels corresponding to the zero-padding to zero. This operation is denoted by \textsc{Proj}. Then it It clips the pixel values to the range $[0,255]$, i.e., it changes the entries that are greater than $255$ to $255$, and that are less than $0$ to $0$.  This operation is denoted by \textsc{Clip}. Finally it takes the 2D-DFT of the spatial domain
image to obtain the frequency domain image. The \textsc{Clip} operation ensures that $\P(\g)$ is in $\S^{N \times N}$ for all $\g$ in $\S^{N \times N}$.  Notice that every operation within $\P$ is a projection operation and therefore, $\P$ performs a projection operation. By the property of projection operations, $\P$ is non-expansive with respect to the Frobenious norm. We call this the \textsc{Non-Expansive} property of $\P$.}

 Let the map $\T:
\edit{\mathbb{S}^{N\times N} \to  \mathbb{S}^{N\times N}} $ be defined as:
\begin{align}
\!\!\T(\g) & := \P \left(\gamma \X_R +  \g - \gamma \left(
\F(\g) - \frac{1}{2} \right) \right),~~ \g \in \S^{N\times
N} \label{eq:T} \\
\gamma & \in \left(0,\frac{2}{f_{\mbox{\footnotesize max}}} \right).
\label{eq:gammaT}
\end{align}

The function $\F: \mathbb{R}^{N\times N} \to  \mathbb{R}^{N\times N}$ is the
cumulative distribution function (CDF) of the noise (including dither), applied
element-wise on the argument. Similarly, function $f: \mathbb{R}^{N\times N} \to
\mathbb{R}^{N\times N}$ is the probability distribution function (PDF) of the
noise, applied element-wise on the argument. The constant
$f_{\mbox{\footnotesize max}}$ is the maximum value of \ul{ $f(x)$ in $x \in
(-\infty, \infty)$.} \revv{the noise PDF for $x$ in the region within the bounds
of the signal.} Condition~\eqref{eq:gammaT} on parameter $\gamma$ is required to
ensure that $\T(\g)$ is a contraction mapping, w.r.t. to the Frobenius norm
i.e., it reduces the Frobenius norm of the difference with the fixed point of
the mapping. The one-bit of the 2D-DFT \ul{phase} i.e., the sign of the real
part of the 2D-DFT is recorded in $\X_R[k_1,k_2]$ according to the sampling
model~\eqref{eq:samp1}. 

\st{The operator $\P$ projects the frequency domain argument to the support region in the spatial domain. It first takes the 2D-IDFT of the argument to obtain an image in the spatial domain. Then it sets the pixels corresponding to the zero-padding to zero. It clips the pixel values to the range $[0,255]$, i.e., it changes the entries that are greater than $255$ to $255$, and that are less than $0$ to $0$. Then it takes the 2D-DFT of the spatial domain image to obtain the frequency domain image. The clipping operation, performed within $\P$ in the spatial domain, ensures that $\T(\g)$ is in $\S^{N \times N}$ for all $\g$ in $\S^{N \times N}$.}

It must be noted that the 2D-IDFT of the real part of the 2D-DFT gives two
copies of the image if appropriate zero-padding is done, i.e., $N>2M$. Recall
that the original spatial domain image is of dimension $M \times M$ pixels. It
is zero-padded such that the 2D-DFT gives a frequency domain image of dimension
$N \times N$ pixels. The detailed algorithm is given in Algorithm~\ref{algo:1}. 

\st{After convergence to the fixed point of the iteration in Algorithm~
we may also filter the image using a guided filtering of appropriate smoothing
parameter. In the numerical simulations, we found that the
guided filter can improve the reconstruction quality in most images, however,
the results we present in Section~
are without the use of the
guided filter with our algorithm.}

\begin{algorithm}
\SetAlgoLined

\KwIn{$(\X_R, ~\F, ~\varepsilon)$}
\KwOut{$G$}
 \DontPrintSemicolon $\G_0 = 0$ \Comment*[r]{initialization}

 \DontPrintSemicolon\Repeat {$k^* \text{such that~} \|\G_{k^*} - \G_{k^*-1}\|_F \leq \varepsilon$}
 {    $\G_{k+1} = \T(\G_k)$  \Comment*[r]{Contraction mapping} } 
 \DontPrintSemicolon $G = \mathfrak{F}^{-1}(\G_{k^*})$	\Comment*[r]{Converting to spatial domain}
%
\caption{Algorithm using one-bit noisy recordings of the 2D-DFT for signal reconstruction}
 \label{algo:1}
\end{algorithm}
The main idea behind Algorithm~\ref{algo:1} is that if the map $\T$ is a
contraction mapping, by Banach's contraction mapping principle, $\T$ has a
unique fixed point. Here the complete metric space on which $\T$ is defined is
$\mathbb{R}^{N\times N}$. We show in Section~\ref{sec:proofs} that for  large
enough $N/M$, the fixed point of $\T$ is a good estimate of the real part of the
2D-DFT of the image to be reconstructed. More specifically, we show that the
expected MSE between the fixed point of $\T$ and the real part of the 2D-DFT of
the original image is $O(M^2/N^2)$. We use Picard's iteration to reach the fixed
point of $\T$, starting from any finite point. We show in Lemma~\ref{lem:T} in
Appendix~\ref{appendix:A} that the convergence of the recursion to the fixed
point is guaranteed by choosing the parameter $\gamma$ as in
eq.~\eqref{eq:gammaT}.  To obtain a fast convergence to the fixed point,
$\gamma$ is set very close to, but less than $\frac{2}{f_{\mbox{\footnotesize
max}}}$.  A denoised version of the required image in the spatial domain can be
obtained from an estimate of the real part of its 2D-DFT by computing its
2D-IDFT if there is sufficient zero padding, i.e., $(N>2M)$.

In Subsection~\ref{subsec:algo2} we discuss the second variation of the proposed
algorithm, i.e., Algorithm~\ref{algo:2}, which is used for signal reconstruction
from two-bit recordings of its 2D-DFT.

%
%
%

\subsection{Using two-bit recordings of the noisy 2D-DFT phase}
\label{subsec:algo2}

Let $\g[k_1,k_2] := \tilde{g}_R[k_1,k_2] + j\tilde{g}_I[k_1,k_2] \in
\mathbb{C}^{N\times N}$ be the image in the frequency domain and $\X[k_1,k_2] :=
\X_R[k_1,k_2] + j\X_I[k_1,k_2] \in \mathbb{C}^{N\times N}$ be recorded according
to the sampling models~\eqref{eq:samp1} and~\eqref{eq:samp2}. \edit{Define the following subset of the set of complex numbers, $ \S_c:= \{(a+jb)|a, b \in [-255N^2, 255N^2]\}$.} 

\edit{Let operator $\Q:\S_c^{N\times N} \to  \S_c^{N\times N}$ be defined as:
\begin{align}
\Q(\g) & :=  \dft{\textsc{Clip}(\textsc{Proj}(\idft{\g}))}. \label{eq:Q}
\end{align}
Operator  $\Q$ is similar to operator $\P$ used in
Algorithm~\ref{algo:1}, but applies to complex matrices. It computes the 2D-IDFT of the frequency domain argument, projects it onto the support region in the spatial domain (\textsc{Proj}), clips the pixel values to lie in the range $[0,255]$ (\textsc{Clip}), and then computes the 2D-DFT to get the output in the frequency domain. Recall that the support region in the spatial domain is given by the pixels other than the zero-padding pixels of the image. The clipping operation, performed within $\Q$ in the spatial domain, ensures that $\Z(\g)$ is in $\S_c^{N \times N}$ for all $\g$ in $\S_c^{N \times N}$. Similar to $\P,$ every operation within $\Q$ is a projection operation and therefore, $\Q$ is a projection operator. By the property of projection operators, $\Q$ is non-expansive with respect to the Frobenious norm. We call this the \textsc{Non-Expansive} property of $\Q$.}

Define the map $\Z
: \edit{ \S_c^{N\times N} \to \S_c^{N\times N}}$ as:
\begin{align}
&\Z(\tilde{g})  =  \label{eq:Z}\\
&Q \left(\gamma (\X_R+j\X_I) +   \tilde{g} - \gamma \left(
\F(\tilde{g}_R)+j\F(\tilde{g}_I) - \frac{1+j}{2} \right) \right), \nonumber  \\
&\gamma \in \left(0,\frac{2}{f_{\mbox{\footnotesize max}}} \right).
\label{eq:gammaZ}
\end{align}
Recall that the function $\F: \mathbb{R}^{N\times N} \to  \mathbb{R}^{N\times
N}$ is the cumulative distribution function of the noise and
$f_{\mbox{\footnotesize max}}$ is the maximum value of \ul{ $f(x)$ in $x \in
(-\infty, \infty)$.} \revv{the noise PDF for $x$ in the region within the bounds
of the signal.} \st{The operator $\Q$ is similar to operator $\P$ used in
Algorithm~
 but  applies to complex matrices and projects to the
support region and dynamic range in the spatial domain. The operator $\Q$ computes the 2D-IDFT of
the frequency domain argument, projects it onto the support region in the
spatial domain, clips the pixel values to lie in the range $[0,255]$, and then computes the 2D-DFT to get the output in the frequency
domain. Recall that the support region in the spatial domain is given by the
pixels other than the zero-padding pixels of the image. The clipping operation, performed within $\Q$ in the spatial domain, ensures that $\Z(\g)$ is in $\S_c^{N \times N}$ for all $\g$ in $\S_c^{N \times N}$.}
\begin{algorithm}
\SetAlgoLined
\KwIn{$(\X_R,\X_I ~\F, ~\varepsilon)$}
\KwOut{$G$}
 \DontPrintSemicolon $\G_0 = 0$ \Comment*[r]{initialization}

 \DontPrintSemicolon\Repeat {$k^* \text{such that~} \|\G_{k^*} - \G_{k^*-1}\|_F \leq \varepsilon$}
 {    $\G_{k+1} = \Z(\G_k)$  \Comment*[r]{Contraction mapping} } 
 \DontPrintSemicolon $G = \mathfrak{F}^{-1}(\G_{k^*})$	\Comment*[r]{Converting to spatial domain}
 
%
\caption{Algorithm using two-bit noisy recordings of the 2D-DFT for signal
reconstruction}
\label{algo:2}
\end{algorithm}

If $\Z$ is a contraction mapping, then by the Banach's contraction mapping
theorem, it has a unique fixed point. $\Z$ is defined over the complete metric
space $\mathbb{C}^{N \times N}$. In Lemma~\ref{lem:Z} in
Appendix~\ref{appendix:B}, we show that $\Z$ is indeed a contraction mapping
with the Frobenius norm as the distance metric. We also show in
Section~\ref{sec:proofs} that this fixed point of $\Z$ is a good estimate of the
2D-DFT of the original image, with the expected MSE of $O(M^2/N^2)$. As in
Algorithm~\ref{algo:1}, we use Picard's iteration to reach the fixed point of
$\Z$. As shown in Lemma~\ref{lem:Z}, the convergence of the recursion in
Algorithm~\ref{algo:2} is guaranteed by choosing the parameter $\gamma$ in $\Z$
as in eq.~\eqref{eq:gammaZ}.

\section{Theoretical Result on the Error in Reconstruction}
\label{sec:proofs}
In this section we provide proofs of the bounds on the expected MSE for the two
variations of the algorithm given in the previous section. In
Subsection~\ref{subsec:proof1}, we give a proof of the error bound for
Algorithm~\ref{algo:1}.
\subsection{Error bound for Algorithm~\ref{algo:1}}
\label{subsec:proof1}
In this sub-section, we give the error convergence result for
Algorithm~\ref{algo:1} as a function of $M/N$. Let the original spatial domain
image be $g[n_1,n_2]$ and its 2D-DFT be given by $\g[k_1,k_2] =
\g_R[k_1,k_2]+j\g_I[k_1,k_2]$. Define $l_R := \F(\g_R)- \frac{1}{2}$. Recall
that the \ul{one-bit recordings of the noisy phase of} \revv{sign recordings of
noisy} $\g[k_1,k_2]$ are $\X_R[k_1,k_2]$. \st{Notice that $\mathbb{E}[\X_R] = l_R$.} Define $S := \P(\X_R)$. \st{Recall that $\gamma S$ denotes the first term in the map $\T$.}
\edit{
\begin{lemma} \label{lem:exp-S}
$\mathbb{E}[\X_R]=l_R$ and $\mathbb{E}[S]=\P(l_R).$
\end{lemma}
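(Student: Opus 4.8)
The plan is to handle the two assertions separately, the first being an entrywise probability computation and the second an operator-level identity that rides on it. First I would prove $\mathbb{E}[\X_R]=l_R$ by working one frequency index $(k_1,k_2)$ at a time. I would bundle the total additive perturbation $W_R[k_1,k_2]+d_R[k_1,k_2]$ into a single random variable; by Assumption~\ref{assumption:noise} together with the Gaussian dither this variable is continuous and its law is symmetric about $0$, with CDF $\F$. Taking expectations in the sampling model~\eqref{eq:samp1}, and using that the event $\{\,\cdot=0\,\}$ has probability zero by continuity, reduces $\mathbb{E}[\X_R[k_1,k_2]]$ to $\mathbb{P}(W_R+d_R>-\g_R[k_1,k_2])-\tfrac12 = 1-\F(-\g_R[k_1,k_2])-\tfrac12$. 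The symmetry relation $\F(-x)=1-\F(x)$ then collapses this to $\F(\g_R[k_1,k_2])-\tfrac12 = l_R[k_1,k_2]$. Since the argument is entrywise, $\mathbb{E}[\X_R]=l_R$ follows.

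For $\mathbb{E}[S]=\P(l_R)$ I would push the expectation through $\P=\dft{}\circ\textsc{Clip}\circ\textsc{Proj}\circ\idft{}$ from~\eqref{eq:P}. The outer $\dft{}$, the inner $\idft{}$, and $\textsc{Proj}$ are linear, so expectation commutes with each of them and, by the first part, $\mathbb{E}\big[\textsc{Proj}(\idft{\X_R})\big]=\textsc{Proj}(\idft{l_R})$. It then remains to interchange $\mathbb{E}$ with the entrywise $\textsc{Clip}$ acting on the spatial matrix $Y:=\textsc{Proj}(\idft{\X_R})$, after which the final $\dft{}$ delivers the claim.

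The hard part will be exactly this last interchange, since $\textsc{Clip}$ is the nonlinear projection onto $[0,255]$ and the text only guarantees that $\P$ is non-expansive, not linear. Here I would exploit that $\X_R$ has entries in $\{-\tfrac12,\tfrac12\}$, so every spatial sample of $\idft{\X_R}$ is bounded in magnitude by $\tfrac12$; the upper saturation at $255$ therefore never activates and $\textsc{Clip}$ reduces to the convex map $\max(0,\cdot)$. Convexity alone only yields $\mathbb{E}[\textsc{Clip}(Y)]\ge\textsc{Clip}(\mathbb{E}[Y])$, so to obtain equality I would have to isolate the single fact that the thresholding support is deterministic, i.e.\ that each pixel of $Y$ almost surely shares the sign of its mean $\mathbb{E}[Y]=\textsc{Proj}(\idft{l_R})$. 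On that event $\textsc{Clip}$ is a fixed coordinate-zeroing (hence linear) map, expectation commutes with it, and applying $\dft{}$ gives $\mathbb{E}[S]=\dft{\textsc{Clip}(\textsc{Proj}(\idft{l_R}))}=\P(l_R)$. If that sign pattern is genuinely random, the best one can hope for is an approximate, $O(\cdot)$ version of the identity with error governed by the probability of sign disagreement; settling whether the equality is exact or only asymptotic is the crux I would need to resolve, and I would therefore isolate the commutation of $\mathbb{E}$ through $\textsc{Clip}$ as the one nontrivial lemma, treating all remaining maps by linearity of expectation.
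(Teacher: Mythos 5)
Your first claim, $\mathbb{E}[\X_R]=l_R$, is proved correctly and in essentially the same way as the paper (the paper simply cites Assumption~\ref{assumption:noise} where you spell out the CDF computation). The problem is the second claim. You correctly reduce everything to commuting $\mathbb{E}$ with the one nonlinear step, \textsc{Clip}, and you correctly observe that the upper threshold at $255$ can never fire because $|\idft{\X_R}|\leq\tfrac12$ entrywise. But you then stop at the lower threshold $\max(0,\cdot)$, offer only the Jensen inequality $\mathbb{E}[\textsc{Clip}(Y)]\geq\textsc{Clip}(\mathbb{E}[Y])$, and end by conceding that the identity might hold only approximately. As written, the lemma is not proved: you have reduced it to a statement (a deterministic sign pattern for every pixel of $\textsc{Proj}(\idft{\X_R})$) that you neither establish nor have the tools to establish.

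The ingredient you are missing is the one the paper uses to close exactly this step: $\X_R$ is pre-processed to preserve the conjugate symmetry that the DFT of a real image would have, and the paper asserts that this forces $\idft{\X_R}\geq 0$ pointwise. Combined with your bound $|\idft{\X_R}|\leq\tfrac12$, every entry of $\textsc{Proj}(\idft{\X_R})$ lies in $[0,255]$ for \emph{every} realization, so \textsc{Clip} is the identity pathwise --- not merely linear on a deterministic support, but inactive altogether. Hence $\P(\X_R)=\dft{\textsc{Proj}(\idft{\X_R})}$ is a genuinely linear function of $\X_R$, and likewise $\P(l_R)=\dft{\textsc{Proj}(\idft{l_R})}$; linearity of expectation then gives $\mathbb{E}[S]=\P(\mathbb{E}[\X_R])=\P(l_R)$ with no convexity argument needed. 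Your fallback route (an almost-surely constant sign per pixel) would in fact suffice if you could prove it, but it is both harder to verify and unnecessary once the nonnegativity of $\idft{\X_R}$ is in hand; without either fact, the proof has a genuine hole.
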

\begin{proof}
$\mathbb{E}[\X_R]=l_R$ follows from the definition of $\X_R$ and the fact that the noise is additive zero-mean and its distribution is symmetric w.r.t. $0$ (Assumption~\ref{assumption:noise}). Recall that $\P$ performs a series of four operations: $\idft{}, \textsc{Proj}, \textsc{Clip},$ and $\dft{}$. When applying \textsc{Clip} to $\textsc{Proj}(\idft{\X_R})$, the argument remains unchanged. This is because $\X_R[k_1,k_2] \in \{\frac{-1}{2}, \frac{1}{2}\}$ and by the definition of $\idft{}$ in eq.~\eqref{eq:2d-idft}, each element of the $\textsc{Proj}(\idft{\X_R})$ is in $[0,255].$\footnote{$\X_R$ is pre-processed to preserve symmetry due to which $\idft{\X_R}\geq 0.$} Therefore, $\P(\X_R) = \dft{\textsc{Proj}(\idft{\X_R})}$ and $\P(l_R) = \dft{\textsc{Proj}(\idft{l_R})}.$ See that $\dft{\textsc{Proj}(\idft{\cdot})}$ is a linear operation, and by the linearity of expectation, we have $\mathbb{E}[S] = \P(\mathbb{E}[\X_R]) = \P(l_R)$.
\end{proof}}
 Consider the following recursion using the map $\T$:
%
\begin{align}
{\G}_{0} = 0,~~ {\G}_{k+1} = \T({\G}_{k}).\nonumber
\end{align}
Let the fixed point of this recursive mapping be ${\G}_{\mbox{\footnotesize
one-bit}}$. We now derive a bound on $\frac{1}{N^2} \mathbbm{E}[ \|
{\G}_{\mbox{\footnotesize one-bit}} - \g_R \|_F^2]$, i.e., the mean-squared
error in ${\G}_{\mbox{\footnotesize one-bit}}$ as an estimate of $\g_R$. The
following theorem is the main result with respect to Algorithm~\ref{algo:1}. The
required lemmas are in Appendix~\ref{appendix:A}.
\begin{theorem} \label{thm:algo1}
The expected MSE for Algorithm~\ref{algo:1} is $O(M^2/N^2)$.
\end{theorem}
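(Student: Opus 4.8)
The plan is to route the comparison between the random fixed point $\G_{\text{one-bit}}$ and the target $\g_R$ through an intermediate deterministic map in which the sign samples $\X_R$ are replaced by their mean $l_R = \mathbb{E}[\X_R]$ (Lemma~\ref{lem:exp-S}). Define the mean map $\bar\T(\g) := \P(\gamma l_R + \g - \gamma(\F(\g) - \tfrac12))$ and first argue that its unique fixed point is exactly $\g_R$, so that the estimator carries no asymptotic bias. Indeed, $\bar\T(\g_R) = \P(\gamma l_R + \g_R - \gamma l_R) = \P(\g_R)$, and because $\g_R$ is the real part of the 2D-DFT of a genuine image supported on the $M\times M$ region with values in $[0,255]$, the symmetrization preprocessing together with $N>2M$ makes \textsc{Proj} and \textsc{Clip} act as the identity on $\idft{\g_R}$ (exactly as in the argument of Lemma~\ref{lem:exp-S}), giving $\P(\g_R) = \g_R$. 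Consequently the entire reconstruction error is variance, driven solely by the fluctuation $\X_R - l_R$.

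The crux is to quantify this variance correctly. Because \textsc{Clip} is inactive on $\X_R$ and $l_R$, the operator $L := \mathfrak{F}\circ\textsc{Proj}\circ\mathfrak{F}^{-1}$ acts \emph{linearly} on them (Lemma~\ref{lem:exp-S}), so $S - \mathbb{E}[S] = L(\X_R - l_R)$. Using $\mathfrak{F}^{*} = N^2\,\mathfrak{F}^{-1}$ (the adjoint of the unnormalized 2D-DFT) and the fact that \textsc{Proj} is a coordinate projection, one checks $L = L^2 = L^{*}$, i.e. $L$ is an \emph{orthogonal} projection onto the $O(M^2)$-dimensional subspace of frequency arrays whose inverse transform is supported on the image region. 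Since the sampled entries of $\X_R-l_R$ are bounded in $[-\tfrac12,\tfrac12]$ and independent across the sampled half of the spectrum, their covariance has bounded operator norm, and therefore
\begin{align}
\mathbb{E}\big[\|S-\mathbb{E}[S]\|_F^2\big] = \mathbb{E}\big[\|L(\X_R-l_R)\|_F^2\big] = \mathrm{tr}\big(L^{*}L\,\mathrm{Cov}(\X_R)\big) \le c\,\mathrm{tr}(L^{*}L) = c\,\mathrm{rank}(L) = O(M^2), \nonumber
\end{align}
for an absolute constant $c$. This is the decisive step: the orthogonal projection onto the $M^2$-dimensional support subspace collapses the $O(N^2)$ raw variance of the $N^2$ sign samples down to $O(M^2)$.

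It remains to propagate this bound through the fixed-point iteration. Subtracting the two fixed-point equations and setting $\Delta := \G_{\text{one-bit}} - \g_R$, I linearize $\P$ to $L$ (valid where \textsc{Clip} is inactive) and apply the mean value theorem element-wise to $\F$, whose derivative is $f \in [0,f_{\max}]$; writing $D := \mathrm{diag}(f(\cdot))$ this yields the linear relation $(I - L(I-\gamma D))\,\Delta = \gamma(S-\mathbb{E}[S])$. By Lemma~\ref{lem:T} the choice \eqref{eq:gammaT} makes each coordinate factor $1-\gamma f$ lie in $(-1,1)$, so $\|I-\gamma D\|\le\rho:=\sup_x|1-\gamma f(x)|<1$; combined with $\|L\|=1$ this gives $\|L(I-\gamma D)\|\le\rho$, whence $I - L(I-\gamma D)$ is invertible with inverse of operator norm at most $1/(1-\rho)$ (Neumann series). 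Hence $\|\Delta\|_F \le \frac{\gamma}{1-\rho}\|S-\mathbb{E}[S]\|_F$, and taking expectations,
\begin{align}
\frac{1}{N^2}\,\mathbb{E}\big[\|\G_{\text{one-bit}}-\g_R\|_F^2\big] \le \frac{\gamma^2}{(1-\rho)^2 N^2}\,\mathbb{E}\big[\|S-\mathbb{E}[S]\|_F^2\big] = O\!\left(\frac{M^2}{N^2}\right), \nonumber
\end{align}
which is the claimed expected MSE.

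The main obstacle is precisely keeping the projection $L$ attached to the noise throughout the last step. A naive contraction-perturbation estimate that treats $\gamma\X_R$ as an arbitrary perturbation inside the non-expansive $\P$ only gives $\|\Delta\|_F \le \frac{\gamma}{1-\rho}\|\X_R-l_R\|_F$, whose expected square is $O(N^2)$ and yields a vacuous $O(1)$ MSE; the whole point is that the randomness enters through $S=\P(\X_R)$ and is projected, so the $O(N^2)\to O(M^2)$ reduction must survive. Making this rigorous requires justifying that \textsc{Clip} is inactive at and in a neighborhood of both fixed points — so that $\P$ genuinely linearizes to $L$ — and controlling the element-wise nonlinearity of $\F$; I expect this activity-of-clipping and linearization argument to be the delicate part of the proof.
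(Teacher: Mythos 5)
Your proposal is correct and rests on the same three pillars as the paper's proof: (i) the unbiasedness statement $\mathbb{E}[S]=\P(l_R)$ together with the fact that the mean map has fixed point $\g_R$ (the paper's Lemma~\ref{lem:exp-S} and its second recursion~\eqref{eq:recursion2}); (ii) the reduction of the raw $O(N^2)$ sign-sample variance to $O(M^2)$ by the support projection --- your trace/rank computation $\mathrm{tr}\bigl(L^{*}L\,\mathrm{Cov}(\X_R)\bigr)\le c\,\mathrm{rank}(L)$ is exactly the content of Lemma~\ref{lem:var1}, which performs the same calculation coordinate-wise with the columns of the unitary DFT matrix; and (iii) the $\gamma/(1-\alpha)$ amplification coming from the contraction. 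The only genuine divergence is in step (iii): the paper runs two coupled Picard iterations, one driven by $\gamma S$ and one by $\gamma\P(l_R)$, and telescopes the one-step inequality $\|\G_{k+1}-\g_{k+1}\|_F\le\gamma\|S-\P(l_R)\|_F+\alpha\|\G_k-\g_k\|_F$ to the limit, whereas you linearize the fixed-point equation and invert $I-L(I-\gamma D)$ by a Neumann series. The two routes yield the identical bound; yours is more compact but, as you note, requires \textsc{Clip} to be inactive at and near both fixed points so that $\P$ genuinely linearizes to $L$. It is worth observing that the paper needs the very same fact and is quieter about it: writing $\T(\G_k)=\gamma S+\P\bigl(\G_k-\gamma(\F(\G_k)-\tfrac12)\bigr)$ pulls the sample term outside the nonlinear projection, which is legitimate only where $\P$ acts linearly --- precisely the point you flag. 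Your closing remark that keeping $\gamma\X_R$ inside the merely non-expansive $\P$ would give only the vacuous $O(1)$ bound correctly identifies why this splitting is the load-bearing step of the whole argument.
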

\begin{proof}
Recall the definition $S := \P(\X_R)$ and the map $\T$ in eq.~\eqref{eq:T}.
Consider two recursions, one using $\gamma S$ in $\T$ and having
$\G_{\mbox{\footnotesize one-bit}}$ as fixed point and the other using $\gamma
\P(l_R)$ in $\T$ and having $\g_R$ as fixed point. Note that the first
recursion, using $\gamma S$, corresponds to using one-bit noisy samples of
$\g_R[k_1,k_2]$, whereas the second recursion, using $\gamma \P(l_R)$,
corresponds to using the perfect information of $\g_R[k_1,k_2]$. Since the
recursion is insensitive to initialization, we start it with $0$. Let,

\begin{align}
\G_{0} &= \g_{0} = 0, \nonumber\\
\G_{k+1} &= \T(\G_{k}) =  \gamma S + \P\left(  \G_{k} - \gamma  \left(
\F({\G}_{k})-\frac{1}{2}  \right)  \right),  \label{eq:recursion1}\\
\g_{k+1} &= \gamma \P(l_R) + \P  \left(   \g_{k} - \gamma  \left(
\F(\g_{k})-\frac{1}{2} \right)   \right).  \label{eq:recursion2}
\end{align}
The distortion in the reconstructed image in the frequency domain is captured in
the difference between these two recursions. Consider the following difference,
\begin{align}
%
%
& \G_{k+1} - \g_{k+1} = \nonumber \\
& \gamma(S-\P(l_R)) + \P \left( {\G}_{k}-\g_{k} -\gamma \left(
\F({\G}_{k})-\F(\g_{k})\right) \right).\nonumber 
\end{align}
Using the triangle inequality for the Frobenius norm
\cite{kreyszig1989introductory},
\begin{align}
&\left\| \G_{k+1} - \g_{k+1} \right\|_F \leq \gamma \left\|S-\P(l_R)  \right\|_F
\nonumber\\
& + \left \|\P \left( {\G}_{k}-\g_{k} -\gamma \left(
\F({\G}_{k})-\F(\g_{k})\right) \right)  \right\|_F.\nonumber
\end{align}
By the \textsc{Non-Expansive} property of $\P$,
\begin{align}
&\left\| \G_{k+1} - \g_{k+1} \right\|_F\nonumber\\
&  \leq  \gamma \left\|S-\P(l_R)  \right\|_F  + \left \|{\G}_{k}-\g_{k} -\gamma
\left( \F({\G}_{k})-\F(\g_{k})\right)  \right\|_F. \nonumber
\end{align}
By the Lagrange mean value theorem, $\F({\G}_{k})-\F(\g_{k}) = f(c_k)({\G}_{k} -
\g_{k})$ for some matrix $c_k \in \mathbb{R}^{N \times N}$ between ${\G}_{k}$
and $\g_{k}$ such that each entry of $c_k$ is between the corresponding entries
of ${\G}_{k}$ and $\g_{k}$. \edit{Since ${\G}_{k}$ and $\g_{k}$ are outputs of $T,$ they are both in the bounded set $\mathbb{S}^{N\times N}$ and therefore $c_k$ is also in  $\mathbb{S}^{N\times N}$.} Using this, we get
\begin{align}
&\left\| \G_{k+1} - \g_{k+1} \right\|_F \leq \nonumber\\
& \gamma \left\|S-\P(l_R)  \right\|_F  + \left \| \left(1-\gamma f(c_k)\right)
\left({\G}_{k}-\g_{k}\right) \right\|_F. \nonumber
\end{align}
Define $\alpha$ as: $\alpha = \|1-\gamma f \|_{\max}$. Using $\alpha$, we get
\begin{align}
& \left\| \G_{k+1} - \g_{k+1} \right\|_F \leq \gamma \|S-\P(l_R)\|_F  + \alpha
\left\|\left( \G_{k} - \g_{k} \right) \right\|_F, \nonumber\\
& \text{or,~}  \left\| \G_{k+1} - \g_{k+1} \right\|_F - \alpha \left\|\G_{k} -
\g_{k} \right\|_F \leq \gamma \|S-\P(l_R)\|_F.\nonumber
\end{align}
For parameter $\gamma$ chosen according to eq.~\eqref{eq:gammaT}, we know from
Lemma~\ref{lem:T} that both recursions in eqs.~\eqref{eq:recursion1}
and~\eqref{eq:recursion2} converge to their respective fixed points. Thus the
following holds,
\begin{align}
 \lim_{k\to\infty} \left\| \G_k - \g_k \right\|_F - \alpha \left\|\G_{k-1} -
\g_{k-1} \right\|_F &\leq \gamma \|S-\P(l_R)\|_F, \nonumber\\
 \text{or,~} (1-\alpha) \left\| \G_{\mbox{\footnotesize one-bit}} - \g_R
\right\|_F &\leq \gamma \left\|S-\P(l_R) \right \|_F.\nonumber
\end{align}
Squaring  and taking expectation on the previous inequality,
\begin{align}
\mathbbm{E}[ \| {\G}_{\mbox{\footnotesize one-bit}} - \g_R \|_F^2] &\leq
\frac{\gamma^2 \mathbbm{E}[\|S-\P(l_R)\|_F^2]}{(1-\alpha)^2}.\nonumber
\end{align}
Since we want to calculate the expected mean squared error, we average over all
entries of the estimated image in the frequency domain. \edit{By Lemma~\ref{lem:exp-S} and} the bound on the
variance of $S$ from Lemma~\ref{lem:var1} in Appendix~\ref{appendix:A}, we get
\begin{align}
\frac{1}{N^2} \mathbbm{E}[ \| {\G}_{\mbox{\footnotesize one-bit}} - \g_R \|_F^2]
\leq \frac{\gamma^2 (M^2/2N^2)}{(1-\alpha)^2}.\label{eq:result:thm1}
\end{align}
\revv{See that the addition of AWGN dither noise ensures that $f(x) > 0$ for all
$x$ in the region within the bound of the signal. This ensures that for $\gamma$
chosen via eq.~\eqref{eq:gammaT}, $\alpha <1.$ For example with $\gamma =
1/f_{\max}$ we get $\alpha = 1 -f_{\min}/f_{\max} .$} Since the parameters
$\gamma$ and $\alpha$ are independent of the image dimension, \ul{we obtain
that} the expected MSE in the estimate of $\g_R[k_1,k_2]$ is $O(M^2/N^2)$. To
obtain an estimate of spatial domain image $g[n_1,n_2]$, we compute the 2D-IDFT
of $\G_{\mbox{\footnotesize one-bit}}$. Now we get a $N \times N$ pixels image
in the spatial domain which consists of two copies of the required estimate of
$g[n_1,n_2]$ of $M \times M$ pixels each, surrounded by zero-padding. We take an
average over these two copies to get the final estimate. \ul{Since the 2D-IDFT
is a linear transform, the expected MSE in the spatial domain is also
$O(M^2/N^2)$.}
\end{proof}

\subsection{Error bound for Algorithm~\ref{algo:2}}
In this subsection we give the proof for the variation of the algorithm using
two-bit recordings of the noisy \ul{phase of the frequency domain image}
\revv{2D-DFT. The proof is very similar to that for Algorithm~\ref{algo:1}. The
only major difference is that here we deal with complex numbers instead of real
numbers.} We need the following definitions:
\begin{align}
Y &:= \Q(\X_R+j\X_I),\\
l &:= \F(\g_R) + j\F(\g_I) -\frac{1+j}{2}.
\end{align}
See that $Y$ is analogous to $S$ in the proof of Theorem~\eqref{thm:algo1}.
$\gamma Y$ corresponds to the first term in the map $\Z$ in eq.~\eqref{eq:Z}.

\edit{
\begin{lemma} \label{lem:exp-Y}
$\mathbb{E}[\X]=l$ and $\mathbb{E}[Y]=\Q(l).$
\end{lemma}
\begin{proof}
The proof is similar to that of Lemma~\ref{lem:exp-S}. $\mathbb{E}[\X]=l$ follows from Assumption~\ref{assumption:noise} on the noise distribution. 
\textsc{Clip} does not change $\textsc{Proj}(\idft{\X})$. This is because each element of $\textsc{Proj}(\idft{\X})$ is in $[0,255].$ Therefore, $\Q(\X) = \dft{\textsc{Proj}(\idft{\X})}$ and $\Q(l) = \dft{\textsc{Proj}(\idft{l})}.$ Since $\dft{\textsc{Proj}(\idft{\cdot})}$ is a linear operation, and by the linearity of expectation, we have $\mathbb{E}[Y] = \P(\mathbb{E}[\X]) = \P(l_R)$.
\end{proof}
}

Consider the following recursion using $\Z$\st{ as in Algorithm 2} 
\begin{align}
 \G^{(0)} = 0, ~~ \G^{(k+1)} = \Z({\G}^{(k)}).\nonumber
\end{align}
Let the fixed point of this recursive mapping be ${\G}_{\mbox{\footnotesize
two-bit}}$. We now derive a bound on $\frac{1}{N^2} \mathbbm{E}[ \|
{\G}_{\mbox{\footnotesize two-bit}} - \g\|_F^2]  $.
The following theorem is the main result with respect to Algorithm~\ref{algo:2}.
The required lemmas are in the Appendix~\ref{appendix:B}.
\begin{theorem} \label{thm:algo2}
The expected MSE for Algorithm~\ref{algo:2} is $O(M^2/N^2)$.
\end{theorem}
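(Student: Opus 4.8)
The plan is to follow the proof of Theorem~\ref{thm:algo1} essentially line for line, replacing the real quantities $\T$, $\P$, $S$, $l_R$ by their complex counterparts $\Z$, $\Q$, $Y$, $l$. First I would set up two recursions driven by $\Z$ and started at $0$: the noisy one, $\G^{(k+1)} = \gamma Y + \Q(\G^{(k)} - \gamma(\F(\G^{(k)}_R) + j\F(\G^{(k)}_I) - \frac{1+j}{2}))$, converging to $\G_{\mbox{\footnotesize two-bit}}$, and the noiseless one, $\g^{(k+1)} = \gamma \Q(l) + \Q(\g^{(k)} - \gamma(\F(\g^{(k)}_R) + j\F(\g^{(k)}_I) - \frac{1+j}{2}))$, converging to $\g$. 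By Lemma~\ref{lem:exp-Y} we have $\mathbb{E}[Y] = \Q(l)$, so the second recursion is exactly the noiseless analogue of the first. Subtracting them, applying the triangle inequality for the Frobenius norm, and using the \textsc{Non-Expansive} property of $\Q$ to discard the outer projection gives $\|\G^{(k+1)} - \g^{(k+1)}\|_F \le \gamma\|Y - \Q(l)\|_F + \|(\G^{(k)} - \g^{(k)}) - \gamma(\F(\G^{(k)}_R) + j\F(\G^{(k)}_I) - \F(\g^{(k)}_R) - j\F(\g^{(k)}_I))\|_F$.

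The only genuinely new step is the mean value theorem, which now has to respect the complex structure. Since $\F$ acts componentwise on the real and imaginary parts, I would apply the Lagrange mean value theorem separately to each, obtaining factors $f(c^R_k)$ and $f(c^I_k)$ for intermediate matrices $c^R_k, c^I_k \in \S^{N\times N}$; these stay in $\S^{N\times N}$ because, exactly as in the one-bit proof, $\G^{(k)}$ and $\g^{(k)}$ are outputs of $\Z$ and hence lie in the bounded set $\S_c^{N\times N}$. The bracketed term then becomes, entrywise, $(1 - \gamma f(c^R_k))\,\Delta^R_k + j(1 - \gamma f(c^I_k))\,\Delta^I_k$, where $\Delta^R_k$ and $\Delta^I_k$ denote the real and imaginary parts of $\G^{(k)} - \g^{(k)}$. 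Because $|a + jb|^2 = a^2 + b^2$ and each real scalar factor is bounded in magnitude by $\alpha := \|1 - \gamma f\|_{\max}$, the Frobenius norm of this term is at most $\alpha\|\G^{(k)} - \g^{(k)}\|_F$, which is the same contraction constant as in the one-bit case.

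From here the remainder is identical to Theorem~\ref{thm:algo1}. The inequality $\|\G^{(k+1)} - \g^{(k+1)}\|_F - \alpha\|\G^{(k)} - \g^{(k)}\|_F \le \gamma\|Y - \Q(l)\|_F$, combined with the convergence of both recursions to their fixed points guaranteed by Lemma~\ref{lem:Z} (which shows $\Z$ is a contraction for $\gamma$ chosen as in eq.~\eqref{eq:gammaZ}, so that $\alpha < 1$), yields $(1-\alpha)\|\G_{\mbox{\footnotesize two-bit}} - \g\|_F \le \gamma\|Y - \Q(l)\|_F$. Squaring, taking expectation, dividing by $N^2$, and invoking Lemma~\ref{lem:exp-Y} together with the variance bound on $Y$ (the two-bit analogue of Lemma~\ref{lem:var1}, in Appendix~\ref{appendix:B}) gives $\frac{1}{N^2}\mathbb{E}[\|\G_{\mbox{\footnotesize two-bit}} - \g\|_F^2] \le \frac{\gamma^2(M^2/N^2)}{(1-\alpha)^2}$. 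The bound is of the same order as in the one-bit case; the only change is a factor of roughly two in the variance, since both the real and imaginary parts now contribute quantization noise. As $\gamma$ and $\alpha$ are independent of the image dimensions this is $O(M^2/N^2)$, and since the 2D-IDFT is linear the same rate carries over to the spatial-domain reconstruction.

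The main obstacle is bookkeeping rather than conceptual: carrying the complex arithmetic cleanly through the mean value theorem step and verifying that the two independent scalar contractions on the real and imaginary parts recombine into a single Frobenius-norm contraction with the one constant $\alpha$. Once that is in place, every other inequality transfers verbatim from the proof of Theorem~\ref{thm:algo1}.
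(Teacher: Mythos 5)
Your proposal follows the paper's proof essentially line for line: the same pair of recursions driven by $\gamma Y$ and $\gamma \Q(l)$, the same use of the \textsc{Non-Expansive} property of $\Q$, the same componentwise Lagrange mean value theorem on the real and imaginary parts (your $c^R_k, c^I_k$ are the paper's $a_k, b_k$), and the same appeal to Lemmas~\ref{lem:Z}, \ref{lem:exp-Y}, and~\ref{lem:var2}. The only discrepancy is an immaterial constant (the paper's variance bound is $M^2/2N^2$, not $M^2/N^2$, since the two-bit support set has $M^2$ rather than $2M^2$ entries), which does not affect the $O(M^2/N^2)$ conclusion.
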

\begin{proof}
\ul{The proof is on similar lines as that for Theorem}~\eqref{thm:algo1}.
Consider two recursions, one using $\gamma Y$ and having
$\G_{\mbox{\footnotesize two-bit}}$ as its fixed point and the other using
$\Q(\l)$, having $\g$ as its fixed point. Note that the first recursion uses the
two-bit noisy recordings of the \ul{phase of the frequency domain image} 2D-DFT,
whereas the second recursion uses the perfect information of the original
frequency domain image $\g[k_1,k_2]$. Let,
\begin{align}
&\G^{(0)} = \g^{(0)} = 0,  \nonumber\\
&\G^{(k+1)} = \Z(\G^{(k)}) = \label{eq:algo2rec1}\\
& \gamma Y + \Q \left(\G^{(k)} - \gamma \left( \F(\G_R^{(k)}) + j\F(\G_I^{(k)})
-\frac{1+j}{2} \right) \right),  \nonumber  
\end{align}
\begin{align}
& \g^{(k+1)}  = \label{eq:algo2rec2}\\
& \gamma \Q(l) +
\Q \left(\g^{(k)} - \gamma \left( \F(\g_R^{(k)}) + j\F(\g_I^{(k)})
-\frac{1+j}{2} \right) \right).  \nonumber
\end{align}
To calculate the distortion, consider the following difference,
\begin{align}
& \G^{(k+1)}  - \g^{(k+1)} = \gamma(\Y-\Q(\l))   + \Q \left(\G^{(k)}  -
\g^{(k)}\right) \nonumber \\ & - \gamma \Q \left( \F(\G_R^{(k)}) +
j\F(\G_I^{(k)}) - \F(\g_R^{(k)}) - j\F(\g_I^{(k)}) \right).\nonumber
 %
%
\end{align}
Using the triangular inequality for Frobenius norm
\cite{kreyszig1989introductory},

\begin{align}
& \left\| \G^{(k+1)}  - \g^{(k+1)} \right\|_F \leq  \gamma \left\|\Y - \Q(\l)
\right\|_F + \Big \| \Q \left(\G^{(k)}  - \g^{(k)}\right) \nonumber\\
& - \gamma \Q \left( \F(\G_R^{(k)}) + j\F(\G_I^{(k)}) - \F(\g_R^{(k)}) -
j\F(\g_I^{(k)}) \right) \Big\|_F.\nonumber
\end{align}

By the \textsc{Non-Expansive} property of $\Q$,
\begin{align}
\left \| \G^{(k+1)}  - \g^{(k+1)} \right\|_F  \leq  \gamma \left\|\Y - \Q(\l)
\right\|_F + \Big \| \left(\G^{(k)}  - \g^{(k)}\right)\nonumber \\
- \gamma \left( \F(\G_R^{(k)}) + j\F(\G_I^{(k)}) - \F(\g_R^{(k)}) -
  j\F(\g_I^{(k)}) \right) \Big\|_F. \nonumber
\end{align}
By  the Lagrange mean value theorem, $\F(\G_R^{(k)})-\F(\g_R^{(k)}) =
f(a_k)(\G_R^{(k)} - \g_R^{(k)})$ for some matrix $a_k \in \mathbb{R}^{N \times
N}$ between $\G_R^{(k)}$ and $\g_R^{(k)}$ such that each entry of $a_k$ is
between the corresponding entries of $\G_R^{(k)}$ and $\g_R^{(k)}$. Similarly,
$\F(\G_I^{(k)})-\F(\g_I^{(k)}) = f(b_k)(\G_I^{(k)} - \g_I^{(k)})$ for some
matrix $b_k \in \mathbb{R}^{N \times N}$ between $\G_I^{(k)}$ and $\g_I^{(k)}$. \edit{Since ${\G}_{k}$ and $\g_{k}$ are outputs of $Q,$ they are both in the bounded set $\S_c^{N\times N}$ and therefore $a_k + jb_k$ is also in $\S_c^{N\times N}$. Using this:}
\begin{align}
 & \left \| \G^{(k+1)}  - \g^{(k+1)} \right\|_F \leq  \gamma \left\|\Y - \Q(\l)
\right\|_F + \nonumber \\
& \big \| (1-\gamma f(a_k))\big(\G_R^{(k)}  - \g_R^{(k)}\big) +  j(1-\gamma
f(b_k))\big(\G_I^{(k)}  - \g_I^{(k)}\big)  \big\|_F. \nonumber
\end{align}
Using $\alpha = \|1-\gamma f \|_{\max}$, we get,
\begin{align}
&\left \| \G^{(k+1)}  - \g^{(k+1)} \right\|_F  \nonumber \\ 
&  \leq \gamma \|\Y-\Q(\l)\|_F + \alpha \left\|\left(\G_R^{(k)} -
\g_R^{(k)}\right) + j\left( \G_I^{(k)} - \g_I^{(k)} \right)\right \|_F,
\nonumber    \\
&\leq   \gamma \|\Y-\Q(\l)\|_F  + \alpha \left\|\G^{(k)} - \g^{(k)}
\right\|_F,\nonumber \\
& \text{or,~}\nonumber\\
& \left\| \G^{(k+1)}  - \g^{(k+1)} \right\|_F -\alpha \left\| \G^{(k)} -
\g^{(k)} \right\|_F \leq \gamma \|\Y-\Q(\l)\|_F.\nonumber
\end{align}
For parameter $\gamma$ chosen according to eq.~\eqref{eq:gammaZ}, from
Lemma~\ref{lem:Z} we know that both the recursions in eqs.~\eqref{eq:algo2rec1}
and \eqref{eq:algo2rec2} converge to their respective fixed points. Thus,
\begin{align}
&  \lim_{k\to\infty} \left\| \G^{(k+1)}  - \g^{(k+1)} \right\|_F -\alpha \left\|
\G^{(k)} - \g^{(k)} \right\|_F\nonumber\\
&   \leq \gamma \| \Y-\Q(\l) \|_F, \nonumber \\
& \text{or,~} (1-\alpha) \left\| \G_{\mbox{\footnotesize two-bit}} - \g
\right\|_F \leq  \gamma \left\| \Y-\Q(\l) \right \|_F. \nonumber 
\end{align}
Squaring  and taking expectation on the above inequality,
\begin{align}
\mathbbm{E}[ \| {\G}_{\mbox{\footnotesize two-bit}} - \g\|_F^2] &\leq
\frac{\gamma^2 \mathbbm{E}[\|\Y-\Q(\l)\|_F^2]}{(1-\alpha)^2}.\nonumber
\end{align}
Since we want to derive the expected mean squared error, we average over all
entries of the estimated image. Using \edit{Lemma~\ref{lem:exp-Y} and} the bound on the variance of $Y$ from Lemma~\ref{lem:var2} in Appendix~\ref{appendix:B},
\begin{align}
 \frac{1}{N^2} \mathbbm{E}[ \| {\G}_{\mbox{\footnotesize two-bit}} - \g\|_F^2]
\leq \frac{\gamma^2 (M^2/2N^2)}{(1-\alpha)^2}. \label{eq:result:thm2}
\end{align}
Since the parameters $\gamma$ and $\alpha$ are independent of the image
dimensions $M$ and $N$, we obtain that the expected MSE in the estimate of
$\g[k_1,k_2]$ is $O(M^2/N^2)$. To obtain an estimate of the image $g[n_1,n_2]$,
we compute the 2D-IDFT of ${\G}_{\mbox{\footnotesize two-bit}}$. \ul{Since the
2D-IDFT is a linear transform, the expected MSE in reconstruction in the spatial
domain is also $O(M^2/N^2)$.} This completes the proof.
%
%
\end{proof}

\subsection{The use of dither}
\label{subsec:dither}
In the sampling model~\eqref{eq:samp1} and~\eqref{eq:samp2}, we mentioned that
we add dither to the signal \ul{before recording the one-bit samples via the
comparator with $0$.} \revv{if there is insufficient noise.} From the results
for Algorithm~\ref{algo:1} and Algorithm~\ref{algo:2} in~\eqref{eq:result:thm1}
and~\eqref{eq:result:thm2} respectively, we see that the error bound depends on
$\gamma$ and $\alpha$, which depend on the noise distribution. Recall that
$\alpha = \|1-\gamma f \|_{\max}$. Notice that if the noise is not present, then
$\alpha = 1$, and the error bound tends to infinity. Further, for a very large
noise variance, $\alpha$ is close to $1$ and the error bound is large.
\revv{This justifies the use of dither when the noise is small, and also
explains why dither is not required if there is significant noise with the
signal.} \ul{The optimal value of the noise variance is between these two
extremes, and the dither is used to get close to this noise level.}

\section{Numerical Simulations}
\label{sec:sim}
In this section we provide the numerical validation of the results in
Theorems~\ref{thm:algo1} and~\ref{thm:algo2}. We compare Algorithm~\ref{algo:1}
and Algorithm~\ref{algo:2} with two state-of-the-art methods. The first
\ul{state-of-the-art} \revv{algorithm} we consider is the iterative algorithm
given in \cite{curtis1985signal}. We call it the `COL' algorithm after the
initials of the authors, Curtis, Oppenheim, and Lim. The algorithm COL requires
an initial estimate of the 2D-DFT magnitude. As in \cite{curtis1985signal}, we
use an average of 2D-DFT magnitudes of a large number of natural images to
provide this estimate. The other \ul{state-of-the-art} \revv{algorithm} is
derived from the compressed sensing algorithm: \emph{binary iterative
hard-thresholding with partial support estimate weighting (BIHT-PSW)}
\cite{north2015one}. It is shown in \cite{north2015one} that their algorithm is
robust against noise. The algorithm makes use of the knowledge of the support
region. In the scenario of this paper, we have complete knowledge of the support
in the spatial domain and thus we call it \emph{binary iterative
hard-thresholding with support information, or BIHT-SI}. \edit{We do not use convex relaxation based methods \cite{boufounos20081, zymnis2010compressed} for the comparison because they are not designed for the relatively high noise variance regime we consider in this paper.}
 
 We consider four benchmark images, viz, the cameraman, Lena, peppers, and IEEE
logo images of dimensions $128\times128$ pixels each. Thus, $M=128$ in this
experiment. These four original images are shown in Fig.~\ref{orig}. The noise
\ul{(including dither)} is considered to be uncorrelated AWGN of variance
$\sigma^2=100$ \ul{, added separately to the real part and imaginary part of the
2D-DFT.} \revv{We do not add dither in the experiments.}
 
\textbf{Experiment 1.} In this test, we verify \ul{the results of
Theorems}~\ref{thm:algo1} \ul{and }\ref{thm:algo2}\ul{ i.e.,} that the expected
MSE is asymptotically of order $O(M^2/N^2)$. Here the oversampling ratio is
$N^2/M^2$ for the image.  The pixel intensities in the spatial domain are in the
range of $[0,255]$. For natural images, on computing the 2D-DFT, the intensities
are much larger for the lower frequencies and smaller for the higher
frequencies. The value of the noise variance relative to the pixel intensities
is \ul{moderately} high, as evident from the weak reconstruction performance of
the \emph{BIHT-SI} and \emph{COL} despite the use of guided filtering too, as
shown in Figs.~\ref{oppen4} and~\ref{cs4}.

 The results of $\log_{10}(MSE)$ v/s $\log_{10}(N/M)$ for fixed $M$ are shown in
Figs.~\ref{fig:sub-first}, \ref{fig:sub-second}, \ref{fig:sub-third}, and
\ref{fig:sub-fourth}. The slopes of the curves in Figs.~\ref{fig:sub-first},
\ref{fig:sub-second}, \ref{fig:sub-third}, and \ref{fig:sub-fourth},
corresponding to Algorithm~\ref{algo:1}, and Algorithm~\ref{algo:2} are given in
Table~\ref{table:slopes}. This verifies Theorems~\ref{thm:algo1}
and~\ref{thm:algo2}. Whereas, the slopes corresponding to the other two
algorithms, \emph{BIHT-SI} and \emph{COL}, are much smaller and diminishing for
larger values of the sampling rate. This implies that their expected MSE doesn't
improve considerably with a higher sampling rate.

\textbf{Experiment 2.} In this test, we compare the quality of image
reconstruction by the four algorithms being considered. The PSNR, popular due to
its simplicity, is generally not a good metric to compare images
\cite{guan2006edge}. Unlike the human visual system (HVS), the PSNR doesn't
consider structural information such as the edges of the image. The structural
similarity index (SSIM)\cite{wang2004ssim} performs better than the PSNR in this
regard. However, it fails to measure the badly blurred images. More recent
techniques, the Edge Based Structural Similarity (ESSIM) and the Multi-Scale
Structural Similarity (MS-SSIM) \cite{wang2003multi} are designed to improve
upon the SSIM in this regard. The MS-SSIM uses dyadic wavelet transform instead
of Sobel filtering as in the SSIM. We employ these four metrics: PSNR, SSIM,
ESSIM, and MS-SSIM, for the reconstruction quality in this experiment.

\ul{For this experiment, The noise is considered to be uncorrelated AWGN of
variance $\sigma^2=100$, added separately to the real and imaginary parts of the
2D-DFT. The variance is assumed to be known.} For Algorithm~\ref{algo:1},
\emph{BIHT-SI}, and \emph{COL} the one-bit signal recording is of size
$2048\times 2048$ pixels of the noisy real part of the 2D-DFT of the image. For
Algorithm~\ref{algo:2}, the two-bit signal recording is of size $1448\times
1448$ pixels and has one bit each of the noisy real and imaginary parts of the
2D-DFT of the image. Note that $1448 \approx 2048/\sqrt{2}$ and thus, there are
equal number of bits of information and a fair comparison between the four
algorithms. 

The reconstructed images with \emph{COL},  \emph{BIHT-SI},
Algorithm~\ref{algo:1}, and Algorithm~\ref{algo:2} are shown in
Figs.~\ref{oppen4},~\ref{cs4},~\ref{b1}, and~\ref{b2} respectively. Tables
\ref{table:lena}, \ref{table:logo}, \ref{table:cm}, and \ref{table:pep} have the
reconstruction quality metrics for these four algorithms\footnote{Note that the
output of the ESSIM and MS-SSIM depends on calibration parameters. We have set
these parameters to have good contrast in the results for the four algorithms.}.
It can be observed that Algorithm~\ref{algo:1} and Algorithm~\ref{algo:2} have a
similar performance and they produce better results than \emph{BIHT-SI} and
\emph{COL} on all images by all four metrics of image reconstruction quality.
\begin{table}[hbt!]
%
%
\begin{tabular}{|P{0.07\textwidth}||P{0.05\textwidth}|P{0.09\textwidth}|P{0.09\textwidth}|P{0.06\textwidth}|}
%
%
%
\hline
Image & Lena & IEEE logo & Cameraman & Peppers\\
\hline
Algo.~\ref{algo:1}    &  2.03  &   2.01   &  2.00  &   2.04\\
Algo.~\ref{algo:2}    &  2.01  &   1.97   &   1.99  &   2.02\\ 
\hline
\end{tabular}
\caption{Asymptotic slopes of $\log_{10}(MSE)$ v/s $\log_{10}(N/M)$ plots. }
\label{table:slopes}
\end{table}
%
%
\begin{figure*}[p]
\begin{subfigure}[b]{\textwidth}
\centering
\includegraphics[width=.22\linewidth]{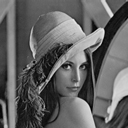}
\hfill
\includegraphics[width=.22\linewidth]{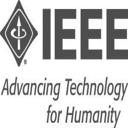}
\hfill 
\includegraphics[width=.22\linewidth]{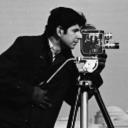}
\hfill 
\includegraphics[width=.22\linewidth]{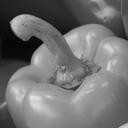}
\caption{Undistorted original images: Lena, IEEE logo, cameraman, and peppers.
All four images are of $128 \times 128$ pixels.}
\label{orig}
\end{subfigure}
%
%
%
\begin{subfigure}{\textwidth}
\centering
\includegraphics[width=.22\linewidth]{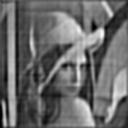}	
\hfill 
\includegraphics[width=.22\linewidth]{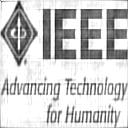}
\hfill 
\includegraphics[width=.22\linewidth]{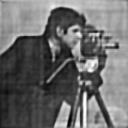}
\hfill 
\includegraphics[width=.22\linewidth]{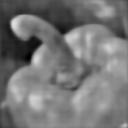}
\caption{Images reconstructed using Algorithm COL with a 2D-DFT of size $2048
\times 2048$ and noise variance equal to $100$. The images were sharpened and
passed thorough a guided filter after reconstruction using \emph{COL} to obtain
the best possible PSNR.} 
\label{oppen4}
\end{subfigure}
%
%
%
%
\begin{subfigure}{\textwidth}
\centering
\includegraphics[width=.22\linewidth]{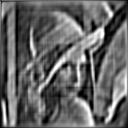}
\hfill 
\includegraphics[width=.22\linewidth]{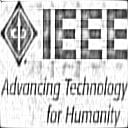}
\hfill 
\includegraphics[width=.22\linewidth]{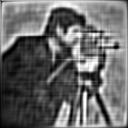}
\hfill 
\includegraphics[width=.22\linewidth]{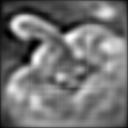} 
\caption{Images reconstructed using Algorithm BIHT-SI with a 2D-DFT of size
$2048 \times 2048$ and noise variance equal to $100$. The images were sharpened
and passed thorough a guided filter after reconstruction using \emph{BIHT-SI} to
obtain the best possible PSNR.}
\label{cs4}
\end{subfigure}
\begin{subfigure}{\textwidth}
\centering
\includegraphics[width=.22\linewidth]{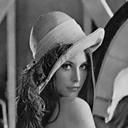}
\hfill 
\includegraphics[width=.22\linewidth]{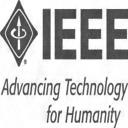}
\hfill 
\includegraphics[width=.22\linewidth]{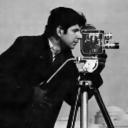}
\hfill 
\includegraphics[width=.22\linewidth]{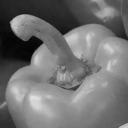}
\caption{Images reconstructed using Algorithm~\ref{algo:1} with a 2D-DFT of size
$2048 \times 2048$ and noise variance equal to $100$.}
\label{b1}
\end{subfigure}
%
%
%
%
%
\begin{subfigure}{\textwidth}
\centering
\includegraphics[width=.22\linewidth]{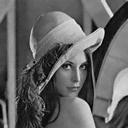}
\hfill 
\includegraphics[width=.22\linewidth]{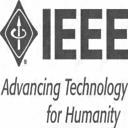}
\hfill 
\includegraphics[width=.22\linewidth]{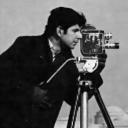}
\hfill 
\includegraphics[width=.22\linewidth]{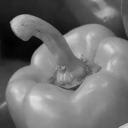}
\caption{Images reconstructed using Algorithm~\ref{algo:2} with a 2D-DFT of size
$1448 \times 1448$ and noise variance equal to $100$.}
\label{b2}
\end{subfigure}
%
%
%
\caption{Simulation results on four benchmark images. The The PSNR and other
reconstruction quality metrics are in
Tables~\ref{table:lena},~\ref{table:logo},~\ref{table:cm}, and~\ref{table:pep}
respectively in the left to right order of the images.}
\end{figure*}
%
%
\begin{figure*}[t]
\begin{subfigure}{.5\textwidth}
\centering
%
%
\includegraphics[height= 0.6\linewidth, width=.85\linewidth]{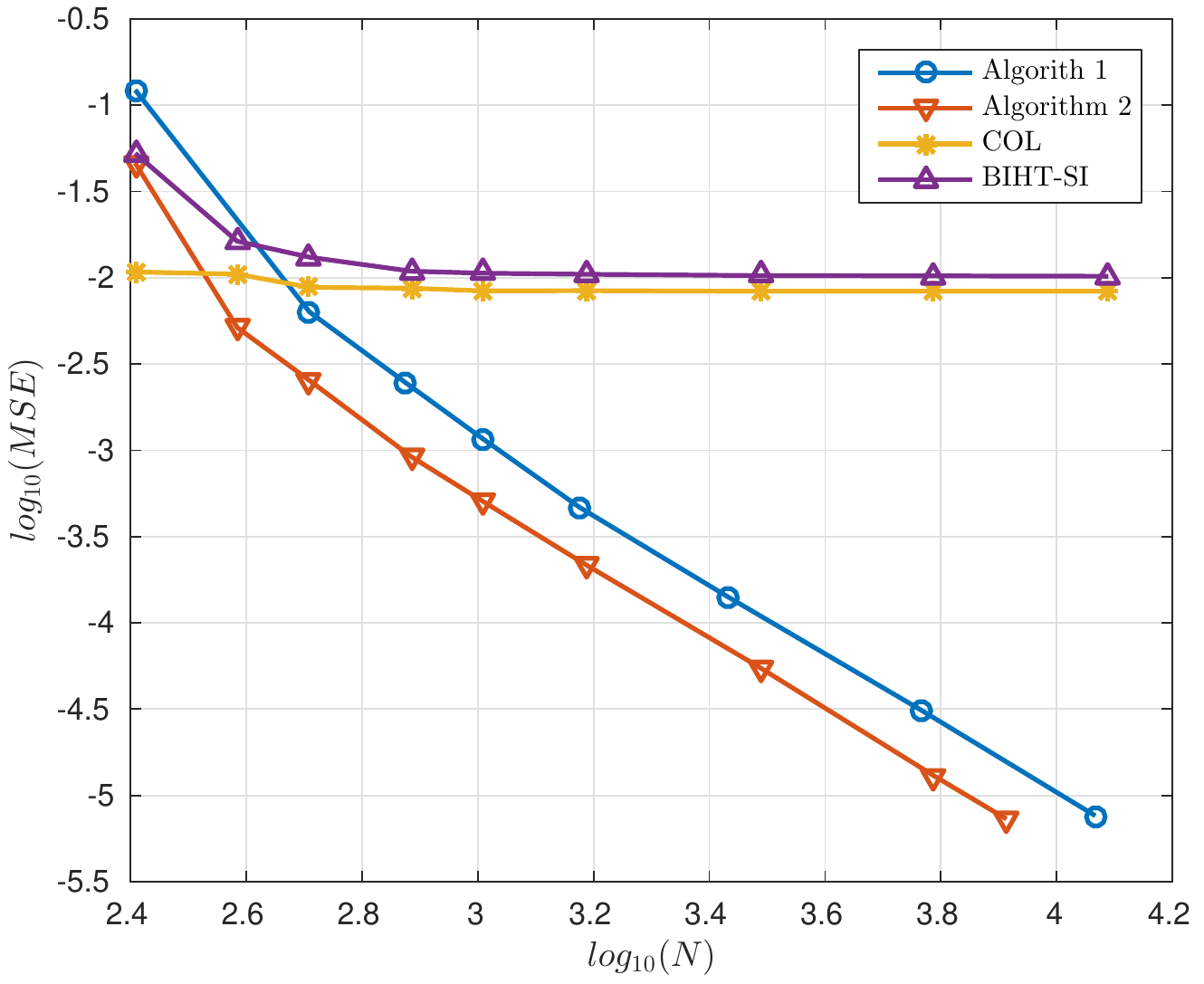}  
\caption{Lena}
\label{fig:sub-first}
\end{subfigure}
\begin{subfigure}{.5\textwidth}
\centering
%
%
\includegraphics[height= 0.6\linewidth, width=.85\linewidth]{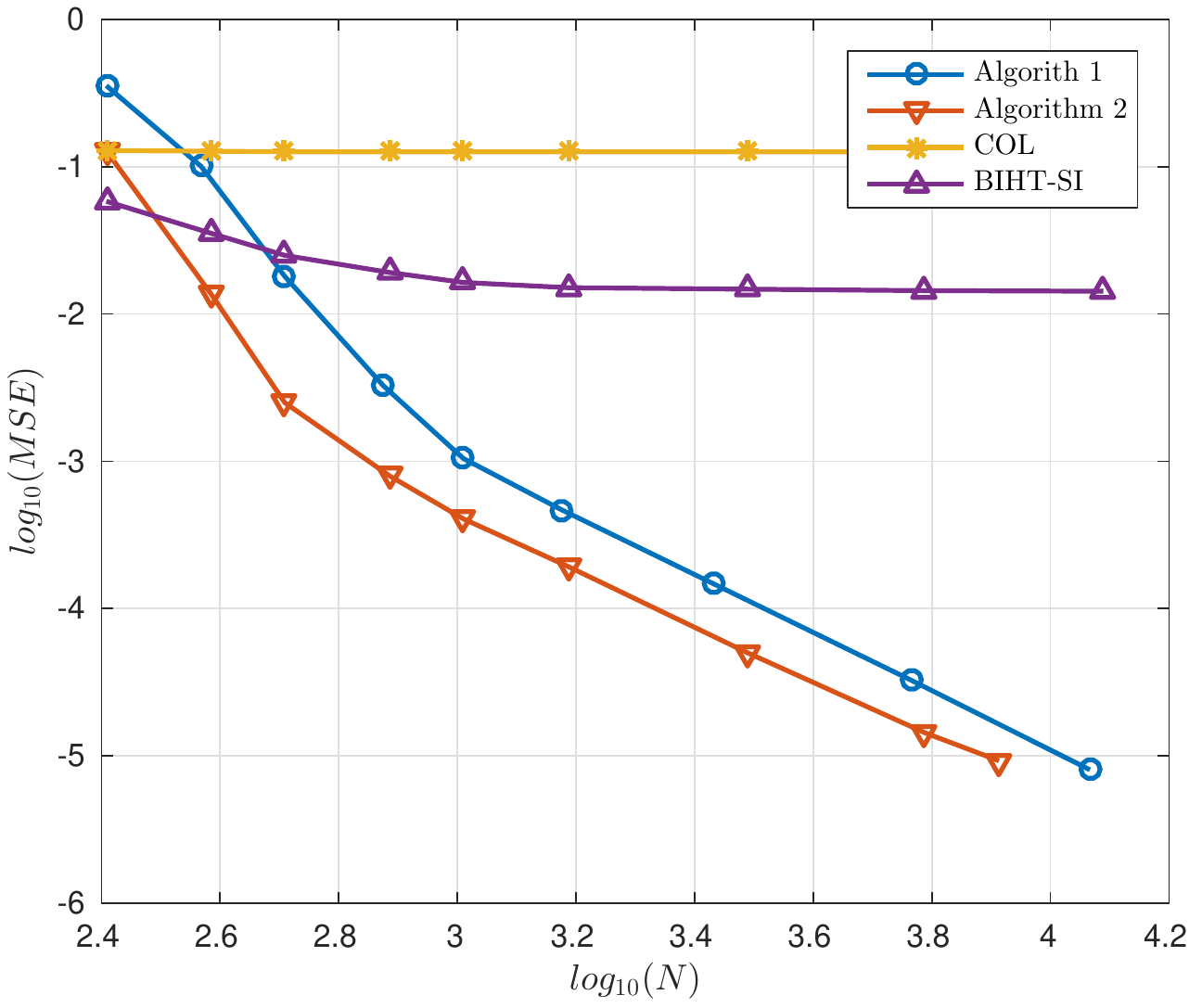} 
\caption{IEEE logo}
\label{fig:sub-second}
\end{subfigure}
\newline
\begin{subfigure}{.5\textwidth}
\centering
%
%
\includegraphics[height= 0.6\linewidth, width=.85\linewidth]{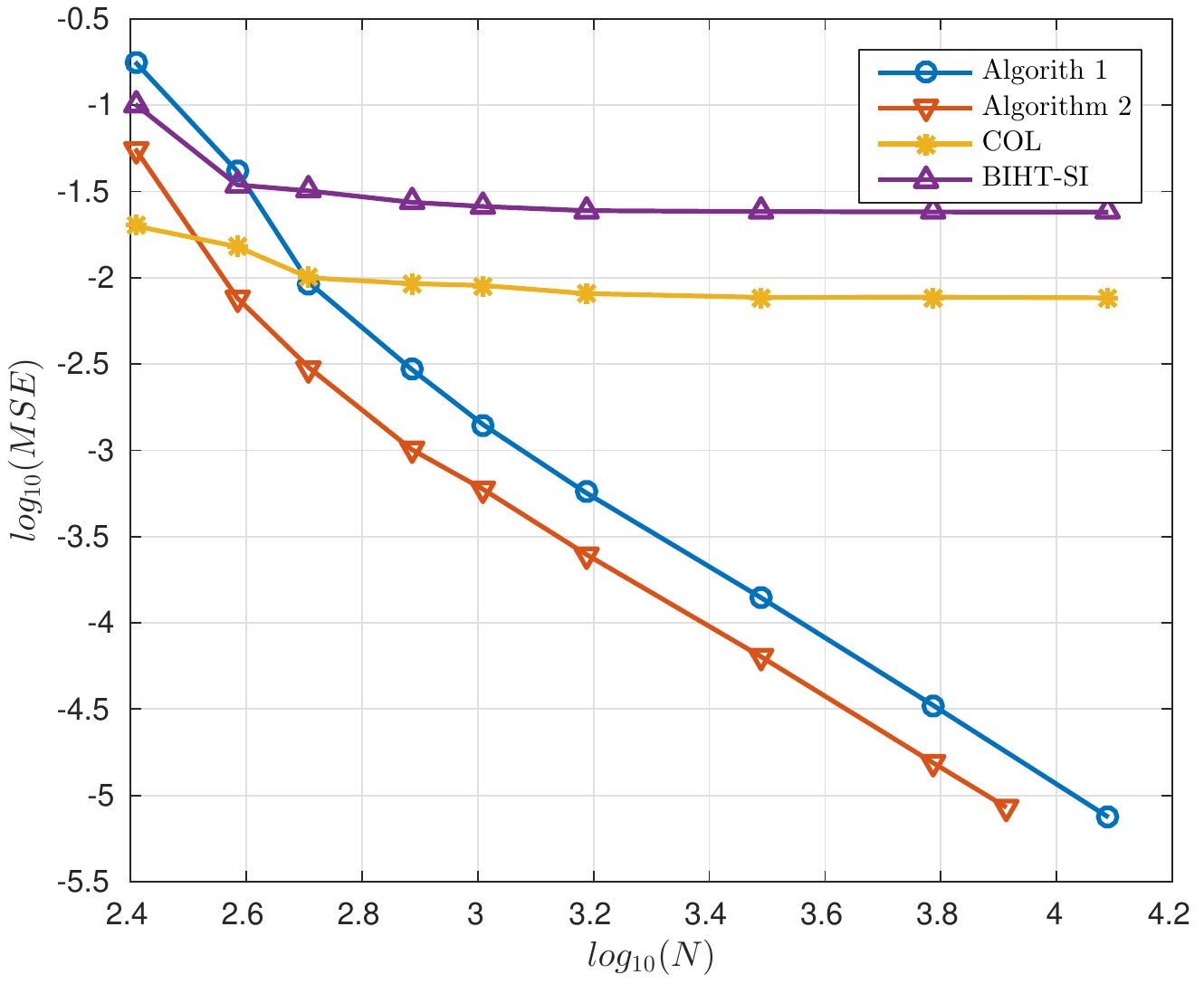} 
\caption{Cameraman}
\label{fig:sub-third}
\end{subfigure}
\begin{subfigure}{.5\textwidth}
\centering
%
%
\includegraphics[height= 0.6\linewidth, width=.85\linewidth]{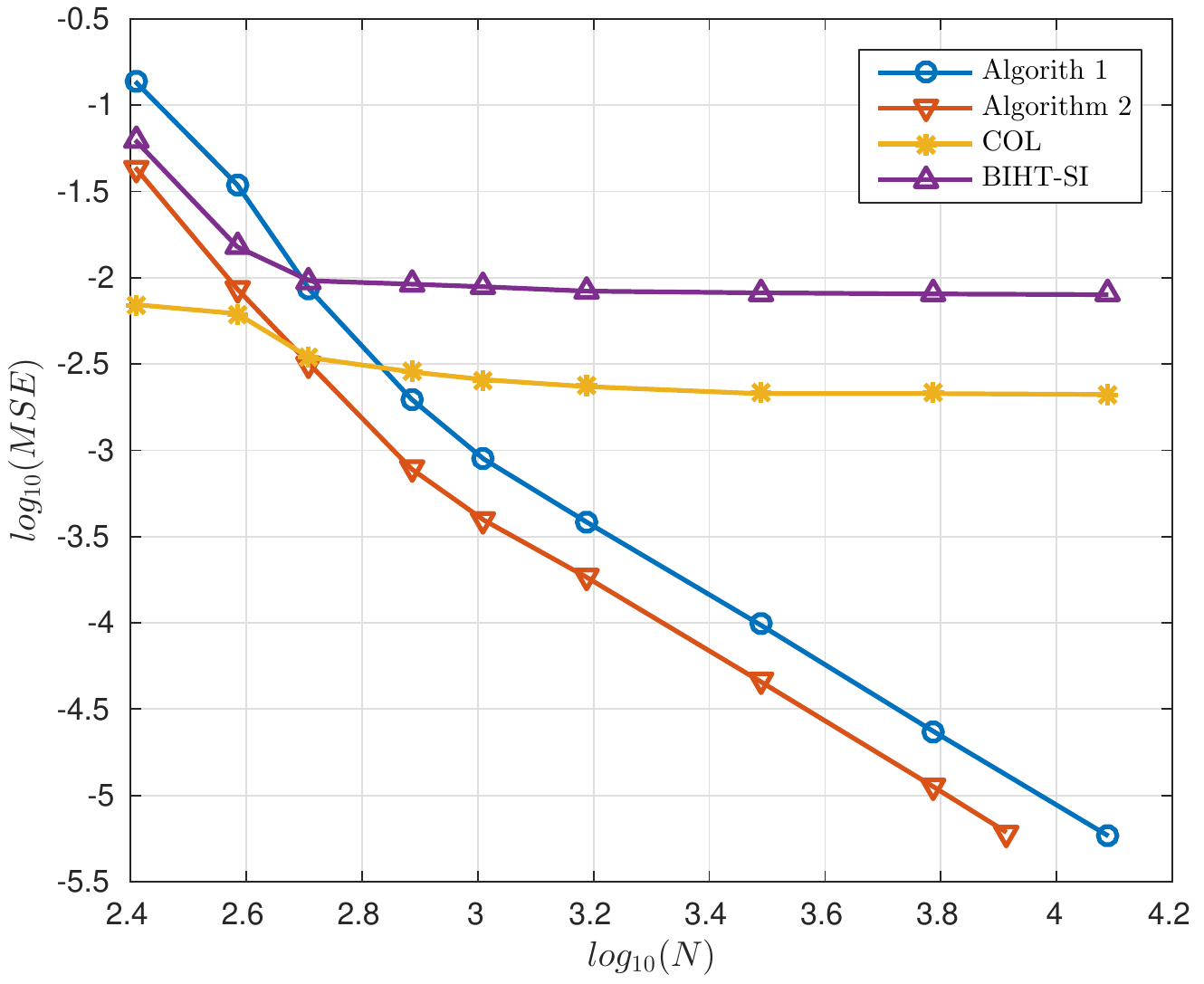} 
\caption{Peppers}
\label{fig:sub-fourth}
\end{subfigure}
\caption{Plots of $\log (MSE)$ v/s $\log (N)$. Here $M$ is fixed to $128$ and
$N^2/M^2$ is the oversampling ratio for the images. The asymptotic slopes of the
plots corresponding to Algorithm~\ref{algo:1} and~\ref{algo:2}  are given in
Table~\ref{table:slopes}.}
\label{fig:fig}
\end{figure*}
%
%
\begin{table}[hbt!]
%
%
\begin{tabular}{|P{0.09\textwidth}||P{0.06\textwidth}|P{0.06\textwidth}|P{0.07\textwidth}|P{0.08\textwidth}|
}
\hline
\multicolumn{5}{|c|}{Lena Image ($\sigma^2 = 100$)} \\
\hline
Method & PSNR & SSIM & ESSIM & MS-SSIM\\
\hline
Algorithm~\ref{algo:1}    &  37.083  &   0.972   &   0.941  &   0.986\\
Algorithm~\ref{algo:2}    &  39.142  &   0.978   &   0.943  &   0.988\\ 
COL &  21.800  &   0.696   &   0.893  &   0.882\\
BIHT-SI  &  18.426  &   0.584   &   0.878  &   0.786\\
\hline
\end{tabular}
\caption{Comparison of the algorithms on the Lena image on the basis of PSNR,
SSIM, ESSIM, and MS-SSIM.}
\label{table:lena}
\end{table}
%
%
\begin{table}[hbt!]
%
%
\begin{tabular}{|P{0.09\textwidth}||P{0.06\textwidth}|P{0.06\textwidth}|P{0.07\textwidth}|P{0.08\textwidth}|
}
\hline
\multicolumn{5}{|c|}{IEEE logo Image ($\sigma^2 = 100$)} \\
\hline
Method & PSNR & SSIM & ESSIM & MS-SSIM\\
\hline
Algorithm~\ref{algo:1}   &  37.430  &   0.989   &   0.868  &   0.975\\
Algorithm~\ref{algo:2}   &  37.190  &   0.991   &   0.902  &   0.984\\ 
COL &  22.202  &   0.825   &   0.775  &   0.923\\
BIHT-SI  &  17.880  &   0.706   &   0.734  &   0.821\\
\hline
\end{tabular} 
\caption{Comparison of the algorithms on the IEEE logo image on the basis of
PSNR, SSIM, ESSIM, and MS-SSIM.}
\label{table:logo}
\end{table}
%
%
\begin{table}[hbt!]
%
%
\begin{tabular}{|P{0.09\textwidth}||P{0.06\textwidth}|P{0.06\textwidth}|P{0.07\textwidth}|P{0.08\textwidth}|
}
 \hline
 \multicolumn{5}{|c|}{Cameraman Image ($\sigma^2 = 100$)} \\
 \hline
 Method & PSNR & SSIM & ESSIM & MS-SSIM\\
 \hline
 Algorithm~\ref{algo:1}    &  37.133  &   0.963   &   0.906  &   0.905\\
  Algorithm~\ref{algo:2}    &  37.511  &   0.969   &   0.905  &   0.916\\ 
 COL &  20.445  &   0.602   &   0.851  &   0.681\\
 BIHT-SI  &  16.293  &   0.450   &   0.796  &   0.613\\
\hline
\end{tabular}
\caption{Comparison of the algorithms on the cameraman image on the basis of
PSNR, SSIM, ESSIM, and MS-SSIM.}
\label{table:cm}
\end{table}
%
%
\begin{table}[hbt!]
%
%
\begin{tabular}{|P{0.09\textwidth}||P{0.06\textwidth}|P{0.06\textwidth}|P{0.07\textwidth}|P{0.08\textwidth}|
}
 \hline
 \multicolumn{5}{|c|}{Peppers Image ($\sigma^2 = 100$)} \\
 \hline
 Method & PSNR & SSIM & ESSIM & MS-SSIM\\
 \hline
 Algorithm~\ref{algo:1}    &  39.546  &   0.977   &   0.961  &   0.983\\
 Algorithm~\ref{algo:2}    &  39.302  &   0.978   &   0.965  &   0.986\\ 
 COL &  26.114  &   0.797   &   0.903  &   0.842\\
 BIHT-SI  &  19.830  &   0.592   &   0.867  &   0.701\\
 \hline
\end{tabular}
\caption{Comparison of the algorithms on the peppers image on the basis of PSNR,
SSIM, ESSIM, and MS-SSIM.}
 \label{table:pep}
\end{table}
%
%

\section{Conclusions}
\label{sec:conclusion}

In this paper, we propose two variations of a novel algorithm for the
reconstruction of signals using one-bit or two-bit noisy recordings of the
\ul{phase of its} 2D-DFT. The signal has zero-mean additive noise of a known
\revv{symmetric} distribution. We use Banach's contraction mapping theorem to
provide a recursion that converges to a close estimate of the signal. The
expected mean squared error in reconstruction is shown to be $O(M^2/N^2)$, where
$N^2/M^2$ is the oversampling ratio for the image. The result is validated via
numerical simulations on four benchmark images. Directions for future work includes developing signal reconstruction algorithms when the noise distributions are not known.

\appendix
\subsection{Proof of lemmas for Algorithm~\ref{algo:1}}
\label{appendix:A}
\begin{lemma} \label{lem:T}
The map $\T$ is a contraction on the set of real matrices \edit{in $S^{N \times N},$} with the Frobenius distance as the metric.
\end{lemma}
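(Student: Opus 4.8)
The plan is to discard the outer projection using the \textsc{Non-Expansive} property of $\P$ already established after eq.~\eqref{eq:P}, so that the entire contraction factor arises from the inner affine-plus-CDF map. First I would fix two arbitrary matrices $\g_1,\g_2 \in \S^{N\times N}$ and subtract their images under eq.~\eqref{eq:T}. The additive term $\gamma\X_R + \tfrac{\gamma}{2}$ is common to both arguments of $\P$ and cancels in the difference, so applying the \textsc{Non-Expansive} property yields
\begin{align}
\|\T(\g_1) - \T(\g_2)\|_F \leq \left\| (\g_1 - \g_2) - \gamma\left(\F(\g_1) - \F(\g_2)\right) \right\|_F. \nonumber
\end{align}

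Next I would linearize $\F$ entry-wise through the Lagrange mean value theorem, exactly as done inside the proof of Theorem~\ref{thm:algo1}: $\F(\g_1) - \F(\g_2) = f(c)(\g_1 - \g_2)$ for some $c$ whose entries lie between the corresponding entries of $\g_1$ and $\g_2$; since $\S^{N\times N}$ is bounded and convex entry-wise, $c \in \S^{N\times N}$. Substituting and using that the entry-wise (Hadamard) multiplier is bounded in magnitude by its max-norm gives
\begin{align}
\|\T(\g_1) - \T(\g_2)\|_F \leq \left\| (1 - \gamma f(c))(\g_1 - \g_2) \right\|_F \leq \alpha \|\g_1 - \g_2\|_F, \nonumber
\end{align}
where $\alpha := \|1 - \gamma f\|_{\max}$, the maximum taken over the region within the signal bounds.

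The remaining and principal step is to verify $\alpha < 1$ strictly, and this is where the range condition on $\gamma$ in eq.~\eqref{eq:gammaT} and the dither enter. Since $f$ is a probability density it is nonnegative, and the AWGN dither guarantees $f(x) > 0$ on the entire region within the signal bounds, so $0 < f_{\min} \leq f(x) \leq f_{\max}$ there. For $\gamma \in (0, 2/f_{\max})$ one then has $0 < \gamma f(x) < 2$ pointwise, hence $|1 - \gamma f(x)| < 1$, and taking the supremum over the region gives $\alpha < 1$. This establishes that $\T$ is a contraction of modulus $\alpha$ on the complete metric space $\S^{N\times N}$ (a closed, bounded subset of $\mathbb{R}^{N\times N}$), so Banach's fixed point theorem applies. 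I expect the delicate point to be precisely the \emph{strict} uniform inequality $\alpha < 1$: without a strictly positive $f_{\min}$ the supremum could reach $1$ and the map would only be non-expansive, which is exactly why the dither assumption guaranteeing $f_{\min} > 0$ is indispensable here.
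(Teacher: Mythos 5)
Your proof follows essentially the same route as the paper's: cancel the common additive term, invoke the \textsc{Non-Expansive} property of $\P$, linearize $\F$ entry-wise via the Lagrange mean value theorem, and bound the resulting factor by $\alpha = \|1-\gamma f\|_{\max}$ with $\gamma \in \left(0, \tfrac{2}{f_{\max}}\right)$. Your closing discussion is in fact slightly more careful than the paper's own lemma proof, since you make explicit that strict contraction ($\alpha < 1$) requires $f_{\min} > 0$ on the bounded signal region --- a point the paper only addresses later, inside the proof of Theorem~\ref{thm:algo1}, when justifying the dither.
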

\begin{proof} 
Recall the map $\T$ as given in eq.~\eqref{eq:T}. We need to show that the Frobenius distance between two matrices \revv{$\g_1 \in \S^{N \times N}$ and $\g_2 \in \S^{N \times N}$} decreases on the application of $\T$,
\begin{align}
 \left \| \T(\g_1)-\T(\g_2) \right \|_F &= \left \|\P  \left( \g_1 - \g_2 -
\gamma\left( \F(\g_1)-\F(\g_2)\right) \right) \right\|_F. \nonumber
\end{align}
By the \textsc{Non-Expansive} property of $\P$,
\begin{align}
 \|  \T(\g_1)-\T(\g_2) \|_F &\leq \left \|  \g_1 - \g_2 - \gamma\left( \F(\g_1)-\F(\g_2)\right) \right\|_F. \nonumber
\end{align}
By the Lagrange mean value theorem, $\left(\F(\g_1)-\F(\g_2) \right) = f(c)(\g_1 - \g_2)$ for some $c$ such that each entry of $c$ is between the corresponding entries of $\g_1$ and $\g_2$. Thus, we get,
\begin{align}
 \| \T(\g_1)- \T(\g_2) \|_F & \leq \| 1-\gamma f\| _{\max} \left\| (\g_1 - \g_2)\right\|_F.\nonumber
\end{align}
\ul{where $f(x)$ is the derivative of $\F(x)$, i.e., $f(x)$ is the probability
density function of the noise.} Recall the definition $\alpha := \|1-\gamma f\|_{\max}$. For $\T$ to be a contraction, we require $ 0 < \alpha < 1$. This is
ensured by restricting $\gamma$ to $\left(0,\frac{2}{f_{\mbox{\footnotesize max}}}\right)$. 
\ul{Recall that $f_{\mbox{\footnotesize max}}$ is the maximum value of $f(x)$ in $x \in (-\infty, \infty)$}
\qedhere
\end{proof}
\begin{lemma} \label{lem:var1}
The average variance of $\P(\X_R)$ is $O(M^2/N^2)$.
\end{lemma}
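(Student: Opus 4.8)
The plan is to reduce the average variance to a spatial-domain energy computation via Parseval's theorem, exploiting that the support retained by \textsc{Proj} contains only $O(M^2)$ of the $N^2$ pixels. I would write the average variance of $S=\P(\X_R)$ as $\frac{1}{N^2}\mathbb{E}\big[\|S-\mathbb{E}[S]\|_F^2\big]$, which equals $\frac{1}{N^2}\sum_{k_1,k_2}\mathrm{Var}(S[k_1,k_2])$. By Lemma~\ref{lem:exp-S} we have $\mathbb{E}[S]=\P(l_R)$, and moreover $\P$ acts on both $\X_R$ and $l_R$ as $\dft{\textsc{Proj}(\idft{\cdot})}$ (the \textsc{Clip} step is the identity on these arguments). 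Since $\dft{\textsc{Proj}(\idft{\cdot})}$ is linear, the first step is to isolate the randomness in the zero-mean matrix $Z:=\X_R-l_R$:
\begin{align}
S-\mathbb{E}[S] = \dft{\textsc{Proj}(\idft{Z})}. \nonumber
\end{align}

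Next I would move to the spatial domain. With the DFT/IDFT conventions of eqs.~\eqref{eq:2d-dft}--\eqref{eq:2d-idft} one has the Parseval identity $\|\dft{A}\|_F^2=N^2\|A\|_F^2$, hence
\begin{align}
\|S-\mathbb{E}[S]\|_F^2 = N^2\,\big\|\textsc{Proj}(\idft{Z})\big\|_F^2. \nonumber
\end{align}
Taking expectations and dividing by $N^2$, the target quantity equals $\mathbb{E}\big[\|\textsc{Proj}(\idft{Z})\|_F^2\big]=\sum_{(n_1,n_2)\in\mathrm{supp}}\mathbb{E}\big[|\idft{Z}[n_1,n_2]|^2\big]$, where the support region is exactly the set of non-zero-padding pixels, i.e. the two $M\times M$ copies of the image, for a total of $2M^2$ pixels.

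The crucial estimate is the per-pixel variance. Each entry of $\X_R$ takes values in $\{-\tfrac12,\tfrac12\}$, so $\mathrm{Var}(Z[k_1,k_2])\le \tfrac14$, and the entries are independent across the (conjugate-symmetric) frequency indices. Since $\idft{Z}[n_1,n_2]=\frac{1}{N^2}\sum_{k_1,k_2}Z[k_1,k_2]\,e^{j2\pi(k_1n_1+k_2n_2)/N}$ is a linear combination with unit-modulus coefficients scaled by $1/N^2$, the cross terms vanish in expectation and
\begin{align}
\mathbb{E}\big[|\idft{Z}[n_1,n_2]|^2\big] = \frac{1}{N^4}\sum_{k_1,k_2}\mathrm{Var}(Z[k_1,k_2]) \le \frac{1}{4N^2}. \nonumber
\end{align}
Summing over the $2M^2$ support pixels then yields $\mathbb{E}\big[\|\textsc{Proj}(\idft{Z})\|_F^2\big]\le \frac{M^2}{2N^2}$, so the average variance is at most $M^2/(2N^2)=O(M^2/N^2)$, which is precisely the constant invoked in eq.~\eqref{eq:result:thm1}.

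The step I expect to be the main obstacle is controlling the energy after \textsc{Proj}: the purely metric bound $\|\textsc{Proj}(A)\|_F\le\|A\|_F$ only gives $\mathbb{E}[\|\textsc{Proj}(\idft{Z})\|_F^2]\le \frac{1}{N^2}\mathbb{E}[\|Z\|_F^2]=O(1)$, which is far too weak. The gain comes entirely from using Parseval to pass to the spatial domain and then exploiting that only $O(M^2)$ pixels survive the projection, each carrying variance $O(1/N^2)$. A secondary technical point is the conjugate symmetry enforced on $\X_R$, which makes symmetric frequency pairs equal rather than independent; this correlates paired terms but only multiplies the per-pixel bound by a constant factor, leaving the $O(M^2/N^2)$ conclusion intact.
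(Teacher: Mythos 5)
Your proof is correct and reaches the paper's exact bound $M^2/(2N^2)$ by essentially the same argument: the paper performs the identical computation in vectorized frequency-domain coordinates, writing $S^v[k]=\sum_n\bigl(\sum_{i\in\mathcal{M}}\u_i^*[k]\u_n[i]\bigr)\X_R^v[n]$ and collapsing the variance sum via the orthonormality of the columns of $\U$, which is precisely the Parseval identity you invoke, combined with the same three ingredients (linearity of $\dft{\textsc{Proj}(\idft{\cdot})}$ on these arguments, independence and per-entry variance $\leq 1/4$, and the $2M^2$-pixel support). Your closing remark on conjugate symmetry is in fact slightly more careful than the paper, which asserts independence of the entries of $\X_R^v$ without addressing the correlation between symmetric frequency pairs.
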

\begin{proof}
\st{Recall that the function $\P$ projects the frequency domain argument to the
support region in the spatial domain.}
Recall that the vectorized form of matrix $g[n_1,n_2]$ is given by $g^v[n]$ and
the vectorization operation on a matrix corresponds to concatenating its columns
in order.  We know that the 2D-DFT is an orthogonal
transform~\cite{jain1989fundamentals}. Therefore, the operation on $g^v[n]$
equivalent to 2D-DFT of $g[n_1,n_2]$ can be given by
\begin{align}
\tilde{g}^v = N \U g^v. \label{def:U}
\end{align}
Here vector $\tilde{g}^v[k]$ is the vectorized form of $\g[k_1,k_2]$. Matrix $\U
\in \mathbb{C}^{N^2 \times N^2}$ is unitary and has orthonormal columns. Denote
the columns of $\U$ by $u_1, u_2,\ldots,u_{N^2}$. Similarly, the 2D-IDFT of
$\g[k_1,k_2]$ can be expressed as an orthogonal transform of $\g^v[k]$, given in the terms of the complex conjugate of $\U$ as:

\begin{align}
g^v = \frac{1}{N}\U^*\tilde{g}^v.  \label{def:Ustar}
\end{align} 
Let the operation on $\g^v[k]$, equivalent to the projection operation $\P$ on
$\g[k_1,k_2]$, be given by $\P^v(\g^v)$. In this paper, for an image of size $M
\times M$ pixels, a $N\times N$ size 2D-DFT is computed. Here $N>2M$. The
2D-IDFT of the real-part of the frequency domain image gives two copies of the
reconstructed image in the spatial domain. Thus, there are $2M^2$ entries in the
support region in the spatial domain. After vectorization, let the indices
corresponding to these support region entries be in the set $\mathcal{M}$. Thus
the cardinality of $\mathcal{M}$ is $2M^2$.

By definition $\X_R^v[i] \in \left\{-\frac{1}{2}, \frac{1}{2}\right\}$.  For
each $i \in \{1,2,\ldots,N^2\}$, $\mbox{var} (\X_R^v[i]) \leq \frac{1}{4}$. Let
$S^v = \P^v(\X_R^v)$ be the vectorized form of $\P(\X_R)$. Recall that $\P$
consists of computing the 2D-IDFT, followed by projection onto the support set
in the spatial domain, \edit{clipping to the signal range in the spatial domain,} and then computing the 2D-DFT. Its vectorized version
$\P^v$ can be given as\edit{\footnote{\edit{Recall the proof of Lemma~\ref{lem:exp-S}, in which we showed that $\P(\X_R) = \dft{\textsc{Proj}(\idft{\X_R})}$ since \textsc{Clip} does not change $\textsc{Proj}(\idft{\X_R}). $ This ensures that $\P(\X_R)$ is a linear transform of $\X_R,$ although $\P$ is not a linear operator in general because of the $\textsc{Clip}$ operation within it.
}}}: 
\begin{align}
& S^v[k] =\P^v(\X_R^v)=  \sum_{n=1}^{N^2} \left( \sum_{i \in \mathcal{M}}
\u_i^*[k] \u_n[i]\right) \X_R^v[n],\nonumber
\intertext{Since $W_R, d_R$ are i.i.d., the elements of $\X_R^v$ are
independent,}
&\mbox{var}(S^v[k])  \leq \sum_{n=1}^{N^2} \left( \sum_{i \in \mathcal{M}}
\u_i^*[k] \u_n[i]\right)^2 \mbox{var} (\X_R^r[n]), \nonumber \\
 &\leq \sum_{n=1}^{N^2} \frac{1}{4}\left( \sum_{i \in \mathcal{M}} \u_i^*[k]
\u_n[i]\right)^2, \nonumber \\
&= \frac{1}{4} \sum_{n=1}^{N^2} \sum_{i \in \mathcal{M}} \sum_{j \in
\mathcal{M}} \u_i^*[k]\u_j^*[k]\u_n[j]\u_n[i]. \nonumber\\
\intertext{For unitary matrices, $\U^\intercal = \U^* $ and thus, $\u_n[j] =
u_j^*[n]$,}
&\mbox{var}(S^v[k]) \leq \frac{1}{4} \sum_{n=1}^{N^2} \sum_{i \in \mathcal{M}}
\sum_{j \in \mathcal{M}} \u_i^*[k]\u_j^*[k]u_j^*[n]u_i^*[n]. \nonumber \\
\intertext{By changing the order of summations, we get,} \nonumber
%
&\leq \frac{1}{4} \sum_{i \in \mathcal{M}} \sum_{j \in
\mathcal{M}} \u_i^*[k]\u_j^*[k] \delta[i-j] = \frac{1}{4} \sum_{i \in
\mathcal{M}} \left( \u_i^*[k]\right)^2. \nonumber\\
%
%
\intertext{To find the mean variance, we average over $k$ on both sides,}
%
%
%
&\frac{1}{N^2}\sum_{k=1}^{N^2} \mbox{var}(S^v[k]) \leq \frac{1}{4N^2}\sum_{i \in
\mathcal{M}}  \sum_{k=1}^{N^2} \left( \u_i^*[k]\right)^2. \nonumber \\
\intertext{By the unit norm of the columns of orthonormal matrix $\U^*$ and
Since $\mathcal{M}$ has $2M^2$ entries,} 
& \frac{1}{N^2}\sum_{k=1}^{N^2} \mbox{var}(S^v[k])  \leq \frac{1}{4N^2}\sum_{i
\in \mathcal{M}} 1 = \frac{M^2}{2N^2}. \nonumber
%
%
\end{align} 
%
%
This completes the proof.
%
\end{proof}

\subsection{Proof of lemmas for Algorithm~\ref{algo:2}}
\label{appendix:B}
\begin{lemma} \label{lem:Z}
The map $\Z$ is a contraction on the set of complex matrices \edit{in $\S_c^{N \times N}$} with the Frobenius distance as the metric.
\end{lemma}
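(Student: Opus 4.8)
The plan is to mirror the proof of Lemma~\ref{lem:T}, adding the bookkeeping required by the complex-valued setting. First I would take two arbitrary matrices $\g_1, \g_2 \in \S_c^{N\times N}$ with real and imaginary parts $\g_1 = \g_{1,R} + j\g_{1,I}$ and $\g_2 = \g_{2,R} + j\g_{2,I}$, and form the difference $\Z(\g_1) - \Z(\g_2)$. In this difference the data term $\gamma(\X_R + j\X_I)$ and the constant $\frac{1+j}{2}$ cancel, so after invoking the \textsc{Non-Expansive} property of $\Q$ it suffices to bound the Frobenius norm of
\[
(\g_1 - \g_2) - \gamma\big[\big(\F(\g_{1,R}) - \F(\g_{2,R})\big) + j\big(\F(\g_{1,I}) - \F(\g_{2,I})\big)\big].
\]

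Next I would apply the Lagrange mean value theorem separately to the real and imaginary channels, exactly as in the proof of Theorem~\ref{thm:algo2}: there exist matrices $a, b \in \R^{N\times N}$, each entry of $a$ (resp.\ $b$) lying between the corresponding entries of $\g_{1,R}$ and $\g_{2,R}$ (resp.\ $\g_{1,I}$ and $\g_{2,I}$), such that the bracketed term equals $f(a)(\g_{1,R} - \g_{2,R}) + j f(b)(\g_{1,I} - \g_{2,I})$, with $f$ applied elementwise. Substituting, the difference rewrites as $(1 - \gamma f(a))(\g_{1,R} - \g_{2,R}) + j(1 - \gamma f(b))(\g_{1,I} - \g_{2,I})$.

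The decisive step is to convert this complex expression back into a Frobenius contraction. I would use that for real matrices $A, B$ the complex matrix $A + jB$ satisfies $\|A + jB\|_F^2 = \|A\|_F^2 + \|B\|_F^2$, so the squared norm splits into a real-part and an imaginary-part contribution. Bounding each elementwise product by $\alpha := \|1 - \gamma f\|_{\max}$ then gives $\|\Z(\g_1) - \Z(\g_2)\|_F^2 \leq \alpha^2\big(\|\g_{1,R} - \g_{2,R}\|_F^2 + \|\g_{1,I} - \g_{2,I}\|_F^2\big) = \alpha^2\|\g_1 - \g_2\|_F^2$. Restricting $\gamma$ to $(0, 2/f_{\mbox{\footnotesize max}})$ forces $\alpha < 1$, exactly as argued in Lemma~\ref{lem:T}, which yields the contraction.

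I expect the main obstacle to be the clean separation of the complex Frobenius norm into its real and imaginary contributions while keeping the elementwise action of $f(a)$ and $f(b)$ under control; once that decomposition is secured, the bound $\alpha < 1$ transfers verbatim from the one-bit argument, since $f$ and hence $f_{\mbox{\footnotesize max}}$ are unchanged.
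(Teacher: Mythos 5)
Your proposal is correct and follows essentially the same route as the paper's proof: cancel the data and constant terms, invoke the \textsc{Non-Expansive} property of $\Q$, apply the Lagrange mean value theorem separately to the real and imaginary channels, and bound by $\alpha = \|1-\gamma f\|_{\max} < 1$. Your explicit use of $\|A+jB\|_F^2 = \|A\|_F^2 + \|B\|_F^2$ merely spells out a step the paper leaves implicit when it passes from the entrywise bound to $\alpha\|\g^{(1)}-\g^{(2)}\|_F$.
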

\begin{proof} Recall the map $\Z$ as given in eq.~\eqref{eq:Z}. We need to show
that the Frobenius distance between two complex matrices $\g^{(1)}, \g^{(2)} \in \S_c^{N \times N}$ decreases under the
application of $\Z$.  \begin{align}
 &\left \| \Z(\g^{(1)})- \Z(\g^{(2)}) \right \|_F =\Big \| \Q  \left( \g^{(1)} -
\g^{(2)} \right)  \nonumber\\
 &  - \gamma \Q\left( \F(\g_R^{(1)})-\F(\g_R^{(2)})  + j\F(\g_I^{(1)}) -
j\F(\g_I^{(2)})  \right) \Big\|_F.\nonumber 
\end{align}
By the \textsc{Non-Expansive} property of $\Q$,
\begin{align}
 & \left \| \Z(\g^{(1)})- \Z(\g^{(2)}) \right \|_F  \leq  \Big \|  \left(
\g^{(1)} - \g^{(2)}  \right)\nonumber \\
 &  - \gamma\left( \F(\g_R^{(1)})-\F(\g_R^{(2)})  + j\F(\g_I^{(1)}) -
j\F(\g_I^{(2)}) \right) \Big\|_F, \nonumber
 \end{align}
By the Lagrange mean value theorem, as used in Lemma~\ref{lem:T}, 
 \begin{align}
%
 &\leq \|1-\gamma f\|_{\max}  \left \|( \g_R^{(1)} - \g_R^{(2)} ) + j(
\g_I^{(1)} - \g_I^{(2)})\right \|_F, \nonumber\\
 &\leq \|1-\gamma f\|_{\max} \left \| \g^{(1)} - \g^{(2)}\right \|_F.\nonumber
\end{align}
Recall the definition $\alpha = \|1-\gamma f\|_{\max}$. For $\Z$ to be a
contraction map, we require $ 0 < \alpha < 1$. This is ensured by restricting
$\gamma$ to $\left(0,\frac{2}{f_{\mbox{\footnotesize max}}}\right)$. Recall that
$f_{\mbox{\footnotesize max}}$ is the maximum value of $f(x)$ in $x \in
(-\infty, \infty)$. 
\end{proof}
\begin{lemma} \label{lem:var2}
The average variance of $\Q(\X)$ is $O(M^2/N^2)$.
\end{lemma}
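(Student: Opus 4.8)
The plan is to reuse the machinery of Lemma~\ref{lem:var1} almost verbatim, with two bookkeeping changes that account for the complex-valued two-bit setting. First I would establish that $\Q$ acts as a linear map on $\X$: exactly as argued inside Lemma~\ref{lem:exp-Y}, the \textsc{Clip} step leaves $\textsc{Proj}(\idft{\X})$ unchanged because each of its entries already lies in $[0,255]$, so $\Q(\X) = \dft{\textsc{Proj}(\idft{\X})}$ is genuinely linear in $\X$. This legitimizes vectorizing as in Lemma~\ref{lem:var1} and writing $Y^v[k] = \sum_{n=1}^{N^2} c_{kn}\,\X^v[n]$, where the coefficient $c_{kn} := \sum_{i \in \mathcal{M}} \u_i^*[k]\u_n[i]$, the index set $\mathcal{M}$ enumerates the spatial support, and $\X^v[n] = \X_R^v[n] + j\X_I^v[n]$ is now complex.

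The second change is the variance that is fed into the bound. Because $W_R, W_I, d_R, d_I$ are mutually independent, the bits $\X_R^v[n]$ and $\X_I^v[n]$ are independent, each valued in $\{-1/2,1/2\}$, so the complex variance obeys $\mathrm{var}(\X^v[n]) = \mathrm{var}(\X_R^v[n]) + \mathrm{var}(\X_I^v[n]) \le \tfrac14 + \tfrac14 = \tfrac12$, while the samples remain independent across $n$. Independence kills the cross terms, yielding $\mathrm{var}(Y^v[k]) = \sum_{n} |c_{kn}|^2\,\mathrm{var}(\X^v[n]) \le \tfrac12 \sum_{n} |c_{kn}|^2$.

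From here the algebra is identical to Lemma~\ref{lem:var1}: expanding $\sum_{n} |c_{kn}|^2 = \sum_{i,j \in \mathcal{M}} \u_i^*[k]\u_j[k]\sum_{n} \u_n[i]\u_n^*[j]$ and invoking the orthonormality of the rows of the unitary $\U$ (so that $\sum_{n} \u_n[i]\u_n^*[j] = \delta[i-j]$) collapses the inner double sum to $\sum_{i \in \mathcal{M}} |\u_i[k]|^2$. Averaging over $k$ and using the unit norms of the columns of $\U$ then gives $\frac{1}{N^2}\sum_{k} \mathrm{var}(Y^v[k]) \le \frac{|\mathcal{M}|}{2N^2}$.

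The single point that genuinely requires care — more a subtlety than a true obstacle — is the cardinality of $\mathcal{M}$. In the one-bit setting of Lemma~\ref{lem:var1}, retaining only the real part produces an even-symmetric inverse transform, hence two copies of the image and $|\mathcal{M}| = 2M^2$; in the two-bit setting the symmetry-preserving complex $\X = \X_R + j\X_I$ inverts to a single real image, so $|\mathcal{M}| = M^2$. Thus the doubled per-sample variance (a factor $\tfrac12$ instead of $\tfrac14$) is exactly offset by the halved support, and I would conclude $\frac{1}{N^2}\sum_{k} \mathrm{var}(Y^v[k]) \le \frac{M^2}{2N^2} = O(M^2/N^2)$, which coincides with the real-valued bound and supplies precisely the variance factor used in eq.~\eqref{eq:result:thm2}.
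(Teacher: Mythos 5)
Your proposal is correct and follows essentially the same route as the paper's proof: linearity of $\Q$ on $\X$ via the \textsc{Clip} argument, vectorization with the coefficients $\sum_{i\in\mathcal{M}}\u_i^*[k]\u_n[i]$, independence across $n$ giving a per-sample variance of $\tfrac12$, orthonormality of $\U$ collapsing the double sum, and the cardinality $|\mathcal{M}|=M^2$ yielding the bound $M^2/(2N^2)$. Your use of $|c_{kn}|^2$ rather than the paper's $(\cdot)^2$ for the complex coefficients is in fact the more careful reading, but it does not change the argument.
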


\begin{proof}
The proof is similar to that of Lemma~\ref{lem:var1}. Recall that the operator
$\Q$ projects the argument to its support region in the spatial domain. For a
frequency domain argument, it first computes its 2D-IDFT, then sets the pixels
outside the support region to zero, and then finally computes its 2D-DFT.  For a
2D-DFT of size $N \times N$ pixels, of an image of size $M \times M$ pixels,
there will be $N^2-M^2$ entries corresponding to the zero-padding, i.e., outside
the support region in the spatial domain. Recall that the vectorized form of
matrix $g[n_1,n_2]$ is given by $g^v[n]$ and the vectorization operation on a
matrix corresponds to concatenating its columns in order. Since the 2D-DFT is an
orthogonal transform~\cite{jain1989fundamentals}, the operation on $g^v[n]$
equivalent to 2D-DFT of $g[n_1,n_2]$ can be given as in eq.~\eqref{def:U}.
Similarly, the 2D-IDFT of $\g[k_1,k_2]$ can be expressed as an orthogonal
transform of $\g^v[k]$, as in eq.\eqref{def:Ustar}.

Let the operation on $\g^v[k]$, equivalent to the projection operation $\Q$ on
$\g[k_1,k_2]$, be given by $\Q^v(\g^v)$. After vectorization, let the indices
corresponding to the support region in the spatial domain be in the set
$\mathcal{M}$. Here the cardinality of $\mathcal{M}$ is $M^2$.
 
By definition $\X_R^v[i] \in \{-\frac{1}{2}, \frac{1}{2}\}$ and $\X_I^v[i] \in
\{-\frac{1}{2}, \frac{1}{2}\}$.  For each $i \in \{1,2,\ldots,N^2\}$,
$\mbox{var} (\X_R^v[i]) \leq \frac{1}{4}$ and $\mbox{var} (\X_I^v[i]) \leq
\frac{1}{4}$. Let $\Y^v = \Q^v(\X)$. Since $\Q$ consists of computing 2D-DFT,
followed by projection onto the support set and then computing 2D-IDFT, its
vectorized version can be given as\edit{\footnote{\edit{Recall the proof of Lemma~\ref{lem:exp-Y}, in which we showed that $\Q(\X) = \dft{\textsc{Proj}(\idft{\X})}$ since \textsc{Clip} does not change $\textsc{Proj}(\idft{\X}).$ This ensures that $\Q(\X)$ is a linear transform of $\X,$ although $\Q$ is not a linear operator in general because of the $\textsc{Clip}$ operation which is a part of $\Q.$
}}}:
\begin{align}
\Y^v[k] =  \sum_{n=1}^{N^2} \left( \sum_{i \in \mathcal{M}} \u_i^*[k]
\u_n[i]\right) \left(\X_R^v[n]+ j \X_I^v[n] \right).\nonumber
\end{align}
Here $u_i, ~\forall ~i \in [1, \cdots, N^2],$ are the orthonormal columns of $\U$. Since $W_d, d_R, W_I,$ and $d_I$ are i.i.d., the elements of $\X_R^v$ and $\X_I^v$ are independent. Therefore,
\begin{align}
&\mbox{var}(\Y^v[k]) \nonumber\\
& \leq \sum_{n=1}^{N^2} \left( \sum_{i \in \mathcal{M}} \u_i^*[k] \u_n[i]\right)^2 \left( \mbox{var} (\X_R^v[n]) + \mbox{var} (\X_i^v[n])  \right),\nonumber  \\
 &= \sum_{n=1}^{N^2} \left( \frac{1}{4}+ \frac{1}{4} \right)\left( \sum_{i \in \mathcal{M}} \u_i^*[k] \u_n[i]\right)^2,  \nonumber\\
&= \frac{1}{2} \sum_{n=1}^{N^2} \sum_{i \in \mathcal{M}} \sum_{j \in \mathcal{M}}
\u_i^*[k]\u_j^*[k]\u_n[j]\u_n[i]. \nonumber\\
\intertext{Since $\U$ is a unitary matrices, $\U^\intercal = \U^*, \u_n[j] = u_j^*[n],$ then} 
&\mbox{var}(\Y^v[k]) \leq \frac{1}{2} \sum_{n=1}^{N^2} \sum_{i \in \mathcal{M}} \sum_{j \in \mathcal{M}}
\u_i^*[k]\u_j^*[k]u_j^*[n]u_i^*[n].  \nonumber\\
\intertext{Changing the order of summations,} 
%
&\leq \frac{1}{2} \sum_{i \in \mathcal{M}} \sum_{j \in \mathcal{M}} \u_i^*[k]~\u_j^*[k] ~\delta[i-j] = \frac{1}{2} \sum_{i \in \mathcal{M}} \left( \u_i^*[k]\right)^2. \nonumber\\
%
%
\intertext{Being interested in the mean variance, we average over $k$,}
%
%
%
&\frac{1}{N^2}\sum_{k=1}^{N^2} \mbox{var}(\Y^v[k]) \leq \frac{1}{2N^2}\sum_{i \in \mathcal{M}}  \sum_{k=1}^{N^2} \left( \u_i^*[k]\right)^2. \nonumber \\
\intertext{By the unit norm of columns of orthonormal matrix $\U^*$ and since $\mathcal{M}$ has $M^2$ entries,} \nonumber 
&\frac{1}{N^2}\sum_{k=1}^{N^2} \mbox{var}(\Y^v[k]) \leq \frac{1}{2N^2}\sum_{i \in \mathcal{M}} 1 = \frac{M^2}{2N^2}. \nonumber
%
%
%
\end{align}
This completes the proof. \qedhere
\end{proof}

\bibliographystyle{IEEEtran}
\bibliography{references}

\begin{thebibliography}{10}
\providecommand{\url}[1]{#1}
\csname url@samestyle\endcsname
\providecommand{\newblock}{\relax}
\providecommand{\bibinfo}[2]{#2}
\providecommand{\BIBentrySTDinterwordspacing}{\spaceskip=0pt\relax}
\providecommand{\BIBentryALTinterwordstretchfactor}{4}
\providecommand{\BIBentryALTinterwordspacing}{\spaceskip=\fontdimen2\font plus
\BIBentryALTinterwordstretchfactor\fontdimen3\font minus
  \fontdimen4\font\relax}
\providecommand{\BIBforeignlanguage}[2]{{%
\expandafter\ifx\csname l@#1\endcsname\relax
\typeout{** WARNING: IEEEtran.bst: No hyphenation pattern has been}%
\typeout{** loaded for the language `#1'. Using the pattern for}%
\typeout{** the default language instead.}%
\else
\language=\csname l@#1\endcsname
\fi
#2}}
\providecommand{\BIBdecl}{\relax}
\BIBdecl

\bibitem{oppenheim1983signal}
A.~V. Oppenheim, J.~S. Lim, and S.~R. Curtis, ``Signal synthesis and
  reconstruction from partial {F}ourier-domain information,'' \emph{JOSA},
  vol.~73, no.~11, pp. 1413--1420, 1983.

\bibitem{fienup2013phase}
J.~R. Fienup, ``Phase retrieval algorithms: a personal tour,'' \emph{Applied
  optics}, vol.~52, no.~1, pp. 45--56, 2013.

\bibitem{saxton2013computer}
W.~Saxton, \emph{Computer techniques for image processing in electron
  microscopy}.\hskip 1em plus 0.5em minus 0.4em\relax Academic Press, 2013,
  vol.~10.

\bibitem{ramachandran1970fourier}
G.~N. Ramachandran, G.~N. Ramachandran, and R.~Srinivasan, \emph{{F}ourier
  methods in crystallography}.\hskip 1em plus 0.5em minus 0.4em\relax John
  Wiley \& Sons, 1970.

\bibitem{candes2013phaselift}
E.~J. Candes, T.~Strohmer, and V.~Voroninski, ``Phaselift: Exact and stable
  signal recovery from magnitude measurements via convex programming,''
  \emph{Communications on Pure and Applied Mathematics}, vol.~66, no.~8, pp.
  1241--1274, 2013.

\bibitem{huang2016phase}
K.~Huang, Y.~C. Eldar, and N.~D. Sidiropoulos, ``Phase retrieval from 1d
  {F}ourier measurements: Convexity, uniqueness, and algorithms,'' \emph{IEEE
  Transactions on Signal Processing}, vol.~64, no.~23, pp. 6105--6117, 2016.

\bibitem{kishore2020wirtinger}
V.~{Kishore} and C.~S. {Seelamantula}, ``Wirtinger flow algorithms for phase
  retrieval from binary measurements,'' in \emph{IEEE International Conference
  on Acoustics, Speech and Signal Processing}, 2020.

\bibitem{mukherjee2018phase}
S.~Mukherjee and C.~S. Seelamantula, ``Phase retrieval from binary
  measurements,'' \emph{IEEE Signal Processing Letters}, 2018.

\bibitem{li1983arrival}
Y.~Li and A.~Kurkjian, ``Arrival time determination using iterative signal
  reconstruction from the phase of the cross spectrum,'' \emph{IEEE
  Transactions on Acoustics, Speech, and Signal Processing}, vol.~31, no.~2,
  pp. 502--504, 1983.

\bibitem{curtis1985signal}
S.~Curtis, A.~Oppenheim, and J.~Lim, ``Signal reconstruction from {F}ourier
  transform sign information,'' \emph{IEEE transactions on acoustics, speech,
  and signal processing}, vol.~33, no.~3, pp. 643--657, 1985.

\bibitem{tang1990image}
X.~Tang, Y.~Yuan, and Y.~Wang, ``Image reconstruction from one-bit phase
  information (obpi) with specified histogram constraint,'' in \emph{IEEE
  International Symposium on Circuits and Systems}, 1990, pp. 755--758.

\bibitem{lyuboshenko1999stable}
I.~Lyuboshenko and A.~Akhmetshin, ``Stable signal and image reconstruction from
  noisy {F}ourier transform phase,'' \emph{IEEE transactions on signal
  processing}, vol.~47, no.~1, pp. 244--250, 1999.

\bibitem{thomas1984procedures}
D.~Thomas and M.~Hayes, ``Procedures for signal reconstruction from noisy
  phase,'' in \emph{IEEE International Conference on Acoustics, Speech, and
  Signal Processing}, vol.~9, 1984, pp. 618--621.

\bibitem{gray1987oversampled}
R.~Gray, ``Oversampled sigma-delta modulation,'' \emph{IEEE Transactions on
  Communications}, vol.~35, no.~5, pp. 481--489, 1987.

\bibitem{masry}
E.~Masry, ``The reconstruction of analog signals from the sign of their noisy
  samples,'' \emph{IEEE Transactions on Information Theory}, vol.~27, no.~6,
  pp. 735--745, 1981.

\bibitem{thong2002nonlinear}
T.~Thong and J.~McNames, ``Nonlinear reconstruction of over-sampled coarsely
  quantized signals,'' in \emph{The 45th Midwest Symposium on Circuits and
  Systems}, vol.~2.\hskip 1em plus 0.5em minus 0.4em\relax IEEE, 2002.

\bibitem{daubechies2003approximating}
I.~Daubechies and R.~DeVore, ``Approximating a bandlimited function using very
  coarsely quantized data: A family of stable sigma-delta modulators of
  arbitrary order,'' \emph{Annals of mathematics}, vol. 158, no.~2, pp.
  679--710, 2003.

\bibitem{kumar2013estimation}
A.~Kumar and V.~M. Prabhakaran, ``Estimation of bandlimited signals from the
  signs of noisy samples,'' in \emph{IEEE International Conference on
  Acoustics, Speech and Signal Processing,}, 2013, pp. 5815--5819.

\bibitem{cvetkovic2000single}
Z.~Cvetkovic and I.~Daubechies, ``Single-bit oversampled a/d conversion with
  exponential accuracy in the bit-rate,'' in \emph{Data Compression Conference,
  2000. Proceedings. DCC 2000}.\hskip 1em plus 0.5em minus 0.4em\relax IEEE,
  2000, pp. 343--352.

\bibitem{khobahi2019deep}
S.~{Khobahi}, N.~{Naimipour}, M.~{Soltanalian}, and Y.~C. {Eldar}, ``Deep
  signal recovery with one-bit quantization,'' in \emph{2019 IEEE International
  Conference on Acoustics, Speech and Signal Processing}, 2019.

\bibitem{bender2020spectral}
S.~{Bender}, M.~{Dörpinghaus}, and G.~{Fettweis}, ``On the spectral efficiency
  of bandlimited 1-bit quantized awgn channels with runlength-coding,''
  \emph{IEEE Communications Letters}, 2020.

\bibitem{shao2019channel}
Z.~{Shao}, L.~T.~N. {Landau}, and R.~C. {de Lamare}, ``Channel estimation using
  1-bit quantization and oversampling for large-scale multiple-antenna
  systems,'' in \emph{IEEE International Conference on Acoustics, Speech and
  Signal Processing (ICASSP)}, 2019, pp. 4669--4673.

\bibitem{goyal2018estimation}
M.~Goyal and A.~Kumar, ``Estimation of bandlimited signals on graphs from
  single bit recordings of noisy samples,'' in \emph{2018 26th European Signal
  Processing Conference (EUSIPCO)}.\hskip 1em plus 0.5em minus 0.4em\relax
  IEEE, 2018, pp. 902--906.

\bibitem{boufounos20081}
P.~T. Boufounos and R.~G. Baraniuk, ``1-bit compressive sensing,'' in
  \emph{IEEE Conference on Information Sciences and Systems}, 2008.

\bibitem{zymnis2010compressed}
A.~Zymnis, S.~Boyd, and E.~Candes, ``Compressed sensing with quantized
  measurements,'' \emph{IEEE Signal Processing Letters}, vol.~17, no.~2, pp.
  149--152, 2010.

\bibitem{xu2018quantized}
C.~Xu and L.~Jacques, ``Quantized compressive sensing with rip matrices: The
  benefit of dithering,'' \emph{Information and Inference: A Journal of the
  IMA}, 2018.

\bibitem{jacques2013robust}
L.~Jacques, J.~N. Laska, P.~T. Boufounos, and R.~G. Baraniuk, ``Robust 1-bit
  compressive sensing via binary stable embeddings of sparse vectors,''
  \emph{IEEE Transactions on Information Theory}, vol.~59, no.~4, pp.
  2082--2102, 2013.

\bibitem{friedlander2020nbiht}
M.~P. Friedlander, H.~Jeong, Y.~Plan, and O.~Yilmaz, ``{NBIHT:} an efficient
  algorithm for 1-bit compressed sensing with optimal error decay rate,''
  \emph{arXiv preprint arXiv:2012.12886}, 2020.

\bibitem{boufounos2013angle}
P.~T. Boufounos, ``Angle-preserving quantized phase embeddings,'' in
  \emph{Wavelets and Sparsity XV}, vol. 8858.\hskip 1em plus 0.5em minus
  0.4em\relax International Society for Optics and Photonics, 2013.

\bibitem{boufounos2013sparse}
------, ``Sparse signal reconstruction from phase-only measurements,'' in
  \emph{Proc. Int. Conf. Sampling Theory and Applications}.\hskip 1em plus
  0.5em minus 0.4em\relax Citeseer, 2013.

\bibitem{jacques2020importance}
L.~Jacques and T.~Feuillen, ``The importance of phase in complex compressive
  sensing,'' \emph{arXiv preprint arXiv:2001.02529}, 2020.

\bibitem{seber2012linear}
G.~A. Seber and A.~J. Lee, \emph{Linear regression analysis}.\hskip 1em plus
  0.5em minus 0.4em\relax John Wiley \& Sons, 2012, vol. 329.

\bibitem{goyal1998quantized}
V.~K. Goyal, M.~Vetterli, and N.~T. Thao, ``Quantized overcomplete expansions
  in ir/sup n: analysis, synthesis, and algorithms,'' \emph{IEEE Transactions
  on Information Theory}, vol.~44, no.~1, pp. 16--31, 1998.

\bibitem{ai2014one}
A.~Ai, A.~Lapanowski, Y.~Plan, and R.~Vershynin, ``One-bit compressed sensing
  with non-gaussian measurements,'' \emph{Linear Algebra and its Applications},
  vol. 441, pp. 222--239, 2014.

\bibitem{plan2013one}
Y.~Plan and R.~Vershynin, ``One-bit compressed sensing by linear programming,''
  \emph{Communications on Pure and Applied Mathematics}, vol.~66, no.~8, pp.
  1275--1297, 2013.

\bibitem{so2018reconstruction}
S.~So and K.~K. Paliwal, ``Reconstruction of a signal from the real part of its
  discrete {F}ourier transform [tips \& tricks],'' \emph{IEEE Signal Processing
  Magazine}, vol.~35, no.~2, pp. 162--174, 2018.

\bibitem{bahmani2013robust}
S.~Bahmani, P.~T. Boufounos, and B.~Raj, ``Robust 1-bit compressive sensing via
  gradient support pursuit,'' \emph{arXiv preprint arXiv:1304.6627}, 2013.

\bibitem{wagdy1989effect}
M.~F. Wagdy, ``Effect of various dither forms on quantization errors of ideal
  a/d converters,'' \emph{IEEE Transactions on Instrumentation and
  Measurement}, vol.~38, no.~4, pp. 850--855, 1989.

\bibitem{kreyszig1989introductory}
E.~Kreyszig, \emph{Introductory functional analysis with applications}.\hskip
  1em plus 0.5em minus 0.4em\relax wiley New York, 1989, vol.~1.

\bibitem{north2015one}
P.~North and D.~Needell, ``One-bit compressive sensing with partial support,''
  in \emph{2015 IEEE 6th International Workshop on Computational Advances in
  Multi-Sensor Adaptive Processing}, 2015, pp. 349--352.

\bibitem{guan2006edge}
G.-H. Chen, C.-L. Yang, L.-M. Po, and S.-L. Xie, ``Edge-based structural
  similarity for image quality assessment,'' in \emph{IEEE International
  Conference on Acoustics Speech and Signal Processing Proceedings}, 2006.

\bibitem{wang2004ssim}
Z.~Wang, A.~C. Bovik, H.~R. Sheikh, and E.~P. Simoncelli, ``Image quality
  assessment: from error visibility to structural similarity,'' \emph{IEEE
  Transactions on Image Processing}, vol.~13, no.~4, pp. 600--612, 2004.

\bibitem{wang2003multi}
Z.~Wang, E.~P. Simoncelli, and A.~C. Bovik, ``Multiscale structural similarity
  for image quality assessment,'' in \emph{The Asilomar Conference on Signals,
  Systems Computers}, vol.~2, 2003, pp. 1398--1402 Vol.2.

\bibitem{jain1989fundamentals}
A.~K. Jain, \emph{Fundamentals of digital image processing}.\hskip 1em plus
  0.5em minus 0.4em\relax Englewood Cliffs, NJ: Prentice Hall,, 1989.

\end{thebibliography}

\end{document}